\documentclass{llncs}

\usepackage[T1]{fontenc}
\usepackage[utf8]{inputenc}

\usepackage{amssymb,amsmath,graphicx,graphics,enumerate,tkz-graph}
\pagestyle{plain}
\bibliographystyle{abbrv}
\usepackage{microtype}

\usepackage{fullpage}

\title{On Colouring $(2P_2,H)$-Free and $(P_5,H)$-Free Graphs\thanks{This paper was supported by EPSRC (EP/K025090/1) and the Leverhulme Trust (RPG-2016-258).}}

\author{Konrad K. Dabrowski \and Dani\"el~Paulusma}
\institute{Department of Computer Science, Durham University,\\
Science Laboratories, South Road, Durham DH1 3LE, United Kingdom\\
\texttt{\{konrad.dabrowski,daniel.paulusma\}@durham.ac.uk}}

\newcommand{\NP}{{\sf NP}}

\newcounter{ctrclaim}[theorem]
\newcounter{ctrobs}[theorem]
\newcounter{ctrcase}[theorem]
\newcounter{ctrsubcase}[ctrcase]
\newcounter{ctrfake}

\newtheorem{oproblem}{Open Problem}

\newcommand\figurenames{Figs.}

\interfootnotelinepenalty=10000

\begin{document}
\maketitle
\begin{abstract}
The {\sc Colouring} problem asks whether the vertices of a graph can be coloured with at most~$k$ colours for a given integer~$k$ in such a way that no two adjacent vertices receive the same colour.
A graph is $(H_1,H_2)$-free if it has no induced subgraph isomorphic to~$H_1$ or~$H_2$.
A connected graph~$H_1$ is almost classified if {\sc Colouring} on $(H_1,H_2)$-free graphs is known to be polynomial-time solvable or \NP-complete for all but finitely many connected graphs~$H_2$.
We show that every connected graph~$H_1$ apart from the claw~$K_{1,3}$ and the $5$-vertex path~$P_5$ is almost classified.
We also prove a number of new hardness results for {\sc Colouring} on $(2P_2,H)$-free graphs.
This enables us to list all graphs~$H$ for which the complexity of {\sc Colouring} is open on $(2P_2,H)$-free graphs and all graphs~$H$ for which the complexity of {\sc Colouring} is open on $(P_5,H)$-free graphs.
In fact we show that these two lists coincide.
Moreover, we show that the complexities of {\sc Colouring} for $(2P_2,H)$-free graphs and for $(P_5,H)$-free graphs are the same for all known cases.
\end{abstract}

\section{Introduction}\label{s-intro}

Graph colouring is an extensively studied concept in both Computer Science and Mathematics due to its many application areas.
A {\em $k$-colouring} of a graph~$G=(V,E)$ is a mapping $c:V\to \{1,\dots,k\}$ such that $c(u)\neq c(v)$ whenever $uv\in E$.
The {\sc Colouring} problem that of deciding whether a given graph~$G$ has a $k$-colouring for a given integer~$k$.
If~$k$ is fixed, then we write $k$-{\sc Colouring} instead.
It is well known that even {\sc $3$-Colouring} is \NP-complete~\cite{Lo73}.

Due to the computational hardness of {\sc Colouring}, it is natural to restrict the input to special graph classes.
A class is hereditary if it closed under vertex deletion.
Hereditary graph classes form a large collection of well-known graph classes for which the {\sc Colouring} problem has been extensively studied.
A classical result in the area is due to Gr\"otschel, Lov\'asz, and Schrijver~\cite{GLS84}, who showed that {\sc Colouring} is polynomial-time solvable for perfect graphs.

Graphs with no induced subgraph isomorphic to a graph in a set~${\cal H}$ are said to be {\em ${\cal H}$-free}.
It is readily seen that a graph class~${\cal G}$ is hereditary if and only if it there exists a set~${\cal H}$ such that every graph in~${\cal G}$ is ${\cal H}$-free.
If the graphs of~${\cal H}$ are required to be minimal under taking induced subgraphs, then~${\cal H}$ is unique.
For example, the set~${\cal H}$ of minimal forbidden induced subgraphs for the class of perfect graphs consists of all odd holes and odd antiholes~\cite{CRST06}.

Kr\'al', Kratochv\'{\i}l, Tuza, and Woeginger~\cite{KKTW01} classified the complexity of {\sc Colouring} for the case where~${\cal H}$ consists of a single graph~$H$.
They proved that {\sc Colouring} on $H$-free graphs is polynomial-time solvable if~$H$ is an induced subgraph of~$P_4$ or $P_1+\nobreak P_3$ and \NP-complete otherwise.\footnote{We refer to Section~\ref{s-pre} for notation used throughout Section~\ref{s-intro}.}

Kr\'al' et al.~\cite{KKTW01} also initiated a complexity study of {\sc Colouring} for graph classes defined by two forbidden induced subgraphs~$H_1$ and~$H_2$.
Such graph classes are said to be {\em bigenic}.
For bigenic graph classes, no dichotomy is known or even conjectured, despite many results~\cite{BDJP16,BBKRS05,BGPS12a,CH,DDP15,DGP14,GHP,HH,HL,Hu16,KMP,KKTW01,LM15,Ma13,Ma,ML17,Sc05}.
For instance, if we forbid two graphs~$H_1$ and~$H_2$ with $|V(H_1)|\leq 4$ and $|V(H_2)|\leq 4$, then there are three open cases left, namely when 
$(H_1,H_2)\in \{(K_{1,3},4P_1),\allowbreak (K_{1,3},2P_1+\nobreak P_2),\allowbreak (C_4,4P_1)\}$ (see~\cite{LM15} and \figurename~\ref{fig:fourvertices}).
If~$H_1$ and~$H_2$ are connected with $|V(H_1)|\leq 5$ and $|V(H_2)|\leq 5$, then
there are four open cases left, namely when $H_1=P_5$ and $H_2\in \{\overline{C_3+2P_1},\overline{C_3+P_2},\overline{P_1+2P_2}\}$ (see~\cite{KMP} and \figurename~\ref{fig:fivevertices}) and when $H_1=K_{1,3}$ and $H_2=\overline{C_4+P_1}$ (see~\cite{ML17} and \figurename~\ref{fig:fivevertices}).
To give another example, Blanch\'e et al.~\cite{BDJP16} determined the complexity of {\sc Colouring} for $(H,\overline{H})$-free graphs for every graph~$H$ except when $H=P_3+\nobreak sP_1$ for $s\geq 3$ or $H=P_4+\nobreak sP_1$ for $s\geq 2$.

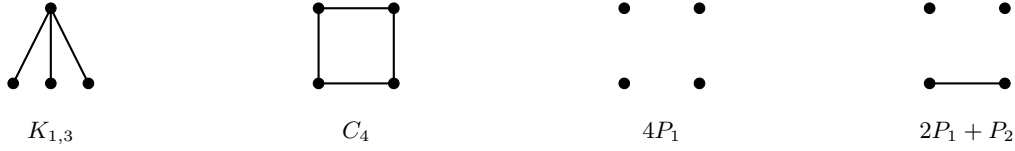
\begin{figure}
\begin{center}
\begin{tabular}{cccc}
\begin{minipage}{0.24\textwidth}
\begin{center}
\begin{tikzpicture}
\coordinate (x00) at (0,0) ;
\coordinate (x10) at (0.5,0) ;
\coordinate (x20) at (1,0) ;
\coordinate (x11) at (0.5,1) ;
\draw [fill=black] (x00) circle (2pt) ;
\draw [fill=black] (x10) circle (2pt) ;
\draw [fill=black] (x20) circle (2pt) ;
\draw [fill=black] (x11) circle (2pt) ;
\draw [thick] (x00) -- (x11);
\draw [thick] (x10) -- (x11);
\draw [thick] (x20) -- (x11);
\end{tikzpicture}
\end{center}
\end{minipage}
&
\begin{minipage}{0.24\textwidth}
\begin{center}
\begin{tikzpicture}
\coordinate (x00) at (0,0) ;
\coordinate (x01) at (0,1) ;
\coordinate (x10) at (1,0) ;
\coordinate (x11) at (1,1) ;
\draw [fill=black] (x00) circle (2pt) ;
\draw [fill=black] (x01) circle (2pt) ;
\draw [fill=black] (x10) circle (2pt) ;
\draw [fill=black] (x11) circle (2pt) ;
\draw [thick] (x00) -- (x01) -- (x11) -- (x10) -- (x00);
\end{tikzpicture}
\end{center}
\end{minipage}
&
\begin{minipage}{0.24\textwidth}
\begin{center}
\begin{tikzpicture}
\coordinate (x00) at (0,0) ;
\coordinate (x01) at (0,1) ;
\coordinate (x10) at (1,0) ;
\coordinate (x11) at (1,1) ;
\draw [fill=black] (x00) circle (2pt) ;
\draw [fill=black] (x01) circle (2pt) ;
\draw [fill=black] (x10) circle (2pt) ;
\draw [fill=black] (x11) circle (2pt) ;
\end{tikzpicture}
\end{center}
\end{minipage}
&
\begin{minipage}{0.24\textwidth}
\begin{center}
\begin{tikzpicture}
\coordinate (x00) at (0,0) ;
\coordinate (x01) at (0,1) ;
\coordinate (x10) at (1,0) ;
\coordinate (x11) at (1,1) ;
\draw [fill=black] (x00) circle (2pt) ;
\draw [fill=black] (x01) circle (2pt) ;
\draw [fill=black] (x10) circle (2pt) ;
\draw [fill=black] (x11) circle (2pt) ;
\draw [thick] (x00) -- (x10);
\end{tikzpicture}
\end{center}
\end{minipage}
\\
\\
$K_{1,3}$ & $C_4$ & $4P_1$ & $2P_1+\nobreak P_2$
\end{tabular}
\end{center}
\caption{\label{fig:fourvertices}The graphs from the three pairs $(H_1,H_2)\in \{(K_{1,3},4P_1),\allowbreak (K_{1,3},2P_1+\nobreak P_2),\allowbreak (C_4,4P_1)\}$ of graphs on at most four vertices, for which the complexity of {\sc Colouring} on $(H_1,H_2)$-free graphs is still open.}
\end{figure}

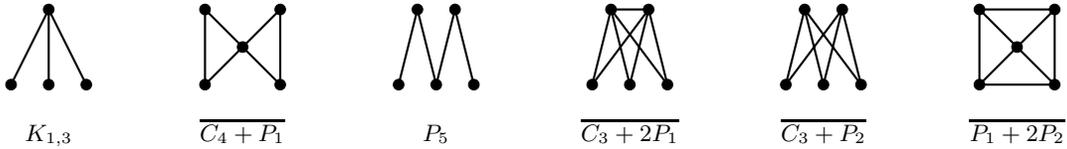
\begin{figure}
\begin{center}
\begin{tabular}{cccccc}
\begin{minipage}{0.15\textwidth}
\begin{center}
\begin{tikzpicture}
\coordinate (x00) at (0,0) ;
\coordinate (x10) at (0.5,0) ;
\coordinate (x20) at (1,0) ;
\coordinate (x11) at (0.5,1) ;
\draw [fill=black] (x00) circle (2pt) ;
\draw [fill=black] (x10) circle (2pt) ;
\draw [fill=black] (x20) circle (2pt) ;
\draw [fill=black] (x11) circle (2pt) ;
\draw [thick] (x00) -- (x11);
\draw [thick] (x10) -- (x11);
\draw [thick] (x20) -- (x11);
\end{tikzpicture}
\end{center}
\end{minipage}
&
\begin{minipage}{0.15\textwidth}
\begin{center}
\begin{tikzpicture}
\coordinate (x00) at (0,0) ;
\coordinate (x01) at (0,1) ;
\coordinate (x10) at (1,0) ;
\coordinate (x11) at (1,1) ;
\coordinate (x22) at (0.5,0.5) ;
\draw [fill=black] (x00) circle (2pt) ;
\draw [fill=black] (x01) circle (2pt) ;
\draw [fill=black] (x10) circle (2pt) ;
\draw [fill=black] (x11) circle (2pt) ;
\draw [fill=black] (x22) circle (2pt) ;
\draw [thick] (x00) -- (x01);
\draw [thick] (x10) -- (x11);
\draw [thick] (x00) -- (x22) -- (x11);
\draw [thick] (x01) -- (x22) -- (x10);
\end{tikzpicture}
\end{center}
\end{minipage}
&
\begin{minipage}{0.15\textwidth}
\begin{center}
\begin{tikzpicture}
\coordinate (x0) at (0,0) ;
\coordinate (y0) at (0.25,1) ;
\coordinate (x1) at (0.5,0) ;
\coordinate (y1) at (0.75,1) ;
\coordinate (x2) at (1,0) ;
\draw [fill=black] (x0) circle (2pt) ;
\draw [fill=black] (y0) circle (2pt) ;
\draw [fill=black] (x1) circle (2pt) ;
\draw [fill=black] (y1) circle (2pt) ;
\draw [fill=black] (x2) circle (2pt) ;
\draw [thick] (x0) -- (y0) -- (x1) -- (y1) -- (x2);
\end{tikzpicture}
\end{center}
\end{minipage}
&
\begin{minipage}{0.15\textwidth}
\begin{center}
\begin{tikzpicture}
\coordinate (x0) at (0,0) ;
\coordinate (y0) at (0.25,1) ;
\coordinate (x1) at (0.5,0) ;
\coordinate (y1) at (0.75,1) ;
\coordinate (x2) at (1,0) ;
\draw [fill=black] (x0) circle (2pt) ;
\draw [fill=black] (y0) circle (2pt) ;
\draw [fill=black] (x1) circle (2pt) ;
\draw [fill=black] (y1) circle (2pt) ;
\draw [fill=black] (x2) circle (2pt) ;
\draw [thick] (y1) -- (x0) -- (y0) -- (x1) -- (y1) -- (x2) -- (y0) -- (y1);
\end{tikzpicture}
\end{center}
\end{minipage}
&
\begin{minipage}{0.15\textwidth}
\begin{center}
\begin{tikzpicture}
\coordinate (x0) at (0,0) ;
\coordinate (y0) at (0.25,1) ;
\coordinate (x1) at (0.5,0) ;
\coordinate (y1) at (0.75,1) ;
\coordinate (x2) at (1,0) ;
\draw [fill=black] (x0) circle (2pt) ;
\draw [fill=black] (y0) circle (2pt) ;
\draw [fill=black] (x1) circle (2pt) ;
\draw [fill=black] (y1) circle (2pt) ;
\draw [fill=black] (x2) circle (2pt) ;
\draw [thick] (y1) -- (x0) -- (y0) -- (x1) -- (y1) -- (x2) -- (y0);
\end{tikzpicture}
\end{center}
\end{minipage}
&
\begin{minipage}{0.15\textwidth}
\begin{center}
\begin{tikzpicture}
\coordinate (x00) at (0,0) ;
\coordinate (x01) at (0,1) ;
\coordinate (x10) at (1,0) ;
\coordinate (x11) at (1,1) ;
\coordinate (x22) at (0.5,0.5) ;
\draw [fill=black] (x00) circle (2pt) ;
\draw [fill=black] (x01) circle (2pt) ;
\draw [fill=black] (x10) circle (2pt) ;
\draw [fill=black] (x11) circle (2pt) ;
\draw [fill=black] (x22) circle (2pt) ;
\draw [thick] (x00) -- (x01) -- (x11) -- (x10) -- (x00);
\draw [thick] (x00) -- (x22) -- (x11);
\draw [thick] (x01) -- (x22) -- (x10);
\end{tikzpicture}
\end{center}
\end{minipage}
\\
\\
$K_{1,3}$ & $\overline{C_4+P_1}$ & $P_5$ & $\overline{C_3+2P_1}$ & $\overline{C_3+P_2}$ & $\overline{P_1+2P_2}$
\end{tabular}
\end{center}
\caption{\label{fig:fivevertices}
The graphs from the four pairs $(H_1,H_2)\in \{(K_{1,3},\overline{C_4+P_1}),(P_5,\overline{C_3+2P_1}),(P_5,\overline{C_3+P_2}),(P_5,\overline{P_1+2P_2})\}$ of connected graphs on at most five vertices, for which the complexity of {\sc Colouring} on $(H_1,H_2)$-free graphs is still open.}
\end{figure}

The related problems {\sc Precolouring Extension} and {\sc List Colouring} have also been studied for bigenic graph classes.
For the first problem, we are given a graph~$G$, an integer~$k$ and a $k$-colouring~$c'$ defined on an induced subgraph of~$G$.
The question is whether~$G$ has a $k$-colouring~$c$ extending~$c'$.
For the second problem, each vertex~$u$ of the input graph~$G$ has a list~$L(u)$ of colours.
Here the question is whether~$G$ has a colouring~$c$ that respects~$L$, that is, with $c(u)\in L(u)$ for all $u\in V(G)$.
For the {\sc Precolouring Extension} problem no classification is known and we refer to the survey~\cite{GJPS} for an overview on what is known.
In contrast to the incomplete classifications for {\sc Colouring} and {\sc Precolouring Extension}, Golovach and Paulusma~\cite{GP14} showed a dichotomy for
the complexity of {\sc List Colouring} on bigenic graph classes.

\medskip
\noindent
{\bf Our Approach.}
To get a handle on the computational complexity classification of {\sc Colouring} for bigenic graph classes, we continue the line of research in~\cite{BBKRS05,HL,KMP,Ma13,Ma,ML17} by considering pairs $(H_1,H_2)$, where~$H_1$ and~$H_2$ are both connected.
We introduce the following notion.
We say that a connected graph~$H_1$ is {\em almost classified} if {\sc Colouring} on $(H_1,H_2)$-free graphs is known to be either polynomial-time solvable or \NP-complete for all but finitely many connected graphs~$H_2$.
This leads to the following research question:
\begin{center}
{\em Which connected graphs are almost classified?}
\end{center}

\medskip
\noindent
{\bf Our Results.}
In Section~\ref{s-almost} we show, by combining known results from the literature, that every connected graph~$H_1$ apart from the claw~$K_{1,3}$ and the $5$-vertex path~$P_5$ is almost classified.
In fact we show that the number of pairs $(H_1,H_2)$ of connected graphs for which the complexity of {\sc Colouring} is unknown is finite if neither~$H_1$ nor~$H_2$ is isomorphic to~$K_{1,3}$ or~$P_5$.
In Section~\ref{s-hard} we prove a number of new hardness results for {\sc Colouring} restricted to $(2P_2,H_2)$-free graphs (which form a subclass of $(P_5,H_2)$-free graphs).
We do the latter by adapting the \NP-hardness construction from~\cite{GH09} for {\sc List Colouring} restricted to complete bipartite graphs.
In Section~\ref{s-con}, we first summarize our knowledge on the complexity of {\sc Colouring} restricted to $(2P_2,H)$-free graphs and $(P_5,H)$-free graphs.
Afterwards, we list all graphs~$H$ for which the complexity of {\sc Colouring} on $(2P_2,H)$-free graphs is still open, and all graphs~$H$ for which the complexity of {\sc Colouring} on $(P_5,H)$-free graphs is still open.
As it turns out, these two lists coincide.
Moreover, the complexities of {\sc Colouring} for $(2P_2,H)$-free graphs and for $(P_5,H)$-free graphs turn out to be the same for all cases that are known.

\section{Preliminaries}\label{s-pre}
We consider only finite, undirected graphs without multiple edges or self-loops.
Let $G=(V,E)$ be a graph.
The {\em complement}~$\overline{G}$ of~$G$ is the graph with vertex set~$V(G)$ and an edge between two distinct vertices if and only if these two vertices are not adjacent in~$G$.
For a subset $S\subseteq V$, we let~$G[S]$ denote the subgraph of~$G$ {\em induced} by~$S$, which has vertex set~$S$ and edge set $\{uv\; |\; u,v\in S, uv\in E\}$.

Let $\{H_1,\ldots,H_p\}$ be a set of graphs.
A graph~$G$ is {\em $(H_1,\ldots,H_p)$-free} if~$G$ has no induced subgraph isomorphic to a graph in $\{H_1,\ldots,H_p\}$.
If~$p=1$, we may write $H_1$-free instead of $(H_1)$-free.
The {\em disjoint union} $G+\nobreak H$ of two vertex-disjoint graphs~$G$ and~$H$ is the graph $(V(G)\cup V(H), E(G)\cup E(H))$.
The disjoint union of~$r$ copies of a graph~$G$ is denoted by~$rG$.
A {\em linear forest} is the disjoint union of one or more paths.

The graphs~$C_r$, $K_r$ and~$P_r$ denote the cycle, complete graph and path on~$r$ vertices, respectively.
The graph~$K_3$ is also known as the {\em triangle}.
The graph~$K_{r,s}$ denotes the complete bipartite graph with partition classes of size~$r$ and~$s$, respectively.
The graph~$K_{1,3}$ is also called the {\em claw}.

The graph~$S_{h,i,j}$, for $1\leq h\leq i\leq j$, denotes the {\em subdivided claw}, that is, the tree that has only one vertex~$x$ of degree~$3$ and exactly three leaves, which are of distance~$h$,~$i$ and~$j$ from~$x$, respectively.
Observe that $S_{1,1,1}=K_{1,3}$. The graph~$S_{1,1,2}$ is also known as the {\em fork} or the {\em chair}.

The graph~$T_{h,i,j}$ with $0\leq h\leq i\leq j$ denotes the graph with vertices $a_0,\ldots,a_h$, $b_0,\ldots,b_i$ and $c_0,\ldots,c_j$ and edges $a_0b_0$, $b_0c_0$, $c_0a_0$, $a_pa_{p+1}$ for $p \in \{0,\ldots, h-\nobreak1\}$, $b_pb_{p+1}$ for $p\in\{0,\ldots,i-\nobreak 1\}$ and $c_pc_{p+1}$ for $p\in\{0,\ldots,j-\nobreak 1\}$.
Note that $T_{0,0,0}=C_3$.
The graph $T_{0,0,1}=\overline{P_1+P_3}$ is known as the {\em paw}, the graph~$T_{0,1,1}$ as the {\em bull}, the graph~$T_{1,1,1}$ as the {\em net}, and the graph~$T_{0,0,2}$ is known as the {\em hammer}; see also \figurename~\ref{fig:Thij-examples}.
Also note that~$T_{h,i,j}$ is the line graph of~$S_{h+1,i+1,j+1}$.

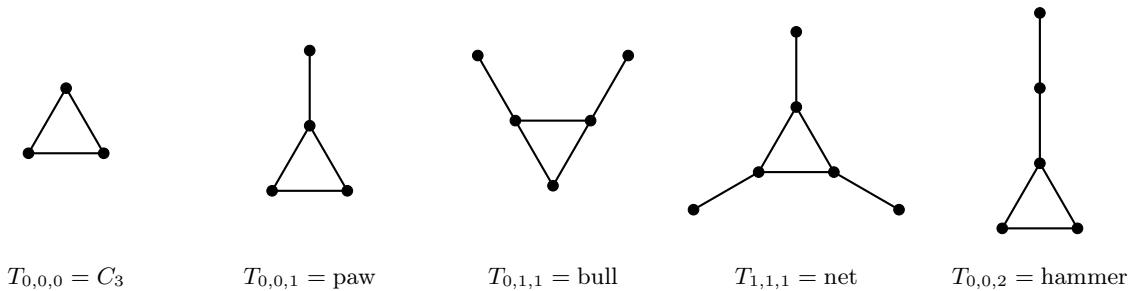
\begin{figure}
\begin{center}
\begin{tabular}{ccccc}
\begin{minipage}{0.19\textwidth}
\begin{center}
\begin{tikzpicture}
\coordinate (a0) at (0.5,0) ;
\coordinate (b0) at (-0.5,0) ;
\coordinate (c0) at (0,0.86602540378) ;
\draw [fill=black] (a0) circle (2pt) ;
\draw [fill=black] (b0) circle (2pt) ;
\draw [fill=black] (c0) circle (2pt) ;
\draw [thick] (a0) -- (c0) -- (b0) -- (a0);
\end{tikzpicture}
\end{center}
\end{minipage}
&
\begin{minipage}{0.19\textwidth}
\begin{center}
\begin{tikzpicture}
\coordinate (a0) at (0.5,0) ;
\coordinate (b0) at (-0.5,0) ;
\coordinate (c0) at (0,0.86602540378) ;
\coordinate (c1) at (0,1.86602540378) ;
\draw [fill=black] (a0) circle (2pt) ;
\draw [fill=black] (b0) circle (2pt) ;
\draw [fill=black] (c0) circle (2pt) ;
\draw [fill=black] (c1) circle (2pt) ;
\draw [thick] (c1) -- (c0) -- (b0) -- (a0) -- (c0);
\end{tikzpicture}
\end{center}
\end{minipage}
&
\begin{minipage}{0.19\textwidth}
\begin{center}
\begin{tikzpicture}
\coordinate (a0) at (0,0) ;
\coordinate (b0) at (60:1) ;
\coordinate (c0) at (120:1) ;
\coordinate (b1) at (60:2) ;
\coordinate (c1) at (120:2) ;
\draw [fill=black] (a0) circle (2pt) ;
\draw [fill=black] (b0) circle (2pt) ;
\draw [fill=black] (c0) circle (2pt) ;
\draw [fill=black] (b1) circle (2pt) ;
\draw [fill=black] (c1) circle (2pt) ;
\draw [thick] (c1) -- (c0) -- (b0) -- (a0) -- (c0);
\draw [thick] (b1) -- (b0);
\end{tikzpicture}
\end{center}
\end{minipage}
&
\begin{minipage}{0.19\textwidth}
\begin{center}
\begin{tikzpicture}
\coordinate (a0) at (90:0.57735026919) ;
\coordinate (b0) at (90+120:0.57735026919) ;
\coordinate (c0) at (90+240:0.57735026919) ;
\coordinate (a1) at (90:1.57735026919) ;
\coordinate (b1) at (90+120:1.57735026919) ;
\coordinate (c1) at (90+240:1.57735026919) ;
\draw [fill=black] (a0) circle (2pt) ;
\draw [fill=black] (b0) circle (2pt) ;
\draw [fill=black] (c0) circle (2pt) ;
\draw [fill=black] (a1) circle (2pt) ;
\draw [fill=black] (b1) circle (2pt) ;
\draw [fill=black] (c1) circle (2pt) ;
\draw [thick] (c1) -- (c0) -- (b0) -- (a0) -- (c0);
\draw [thick] (a1) -- (a0);
\draw [thick] (b1) -- (b0);
\end{tikzpicture}
\end{center}
\end{minipage}
&
\begin{minipage}{0.19\textwidth}
\begin{center}
\begin{tikzpicture}
\coordinate (a0) at (0.5,0) ;
\coordinate (b0) at (-0.5,0) ;
\coordinate (c0) at (0,0.86602540378) ;
\coordinate (c1) at (0,1.86602540378) ;
\coordinate (c2) at (0,2.86602540378) ;
\draw [fill=black] (a0) circle (2pt) ;
\draw [fill=black] (b0) circle (2pt) ;
\draw [fill=black] (c0) circle (2pt) ;
\draw [fill=black] (c1) circle (2pt) ;
\draw [fill=black] (c2) circle (2pt) ;
\draw [thick] (c2) -- (c1) -- (c0) -- (b0) -- (a0) -- (c0);
\end{tikzpicture}
\end{center}
\end{minipage}
\\
\\
$T_{0,0,0}=C_3$ & $T_{0,0,1}=\mbox{paw}$ &  $T_{0,1,1}=\mbox{bull}$ &  $T_{1,1,1}=\mbox{net}$ & $T_{0,0,2}=\mbox{hammer}$
\end{tabular}
\end{center}
\caption{\label{fig:Thij-examples}Examples of~$T_{h,i,j}$ graphs.}
\end{figure}

Let~${\cal T}$ be the class of graphs every component of which is isomorphic to a graph~$T_{h,i,j}$ for some $1\leq h\leq i\leq j$ or a path~$P_r$ for some $r\geq 1$.
The following result, which is due to Schindl and which we use in Section~\ref{s-con}, shows that the~$T_{h,i,j}$ graphs play an important role for our study.

\begin{theorem}[\cite{Sc05}]\label{t-schindl}
For $p\geq 1$, let $H_1,\ldots,H_p$ be graphs whose complement is not in~${\cal T}$.
Then {\sc Colouring} is \NP-complete for $(H_1,\ldots,H_p)$-free graphs.
\end{theorem}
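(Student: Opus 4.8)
The plan is to pass to complements and then exhibit a concrete hard subclass coming from line graphs of long subdivisions of cubic graphs. Since complementation is an involution that turns $H$-free graphs into $\overline{H}$-free graphs and turns the chromatic number~$\chi$ into the clique cover number (the least number of cliques needed to partition the vertex set), {\sc Colouring} on $(H_1,\dots,H_p)$-free graphs is polynomially equivalent to {\sc Clique Cover} on $(F_1,\dots,F_p)$-free graphs, where $F_i:=\overline{H_i}$. By hypothesis each $F_i\notin{\cal T}$, and membership in~\NP\ is clear, so it suffices to prove that {\sc Clique Cover} is \NP-hard on $(F_1,\dots,F_p)$-free graphs.

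Fix an integer $m\ge\max_i|V(F_i)|$. For a cubic graph~$G_0$, let $G:=G_0^{(2m)}$ be obtained from~$G_0$ by replacing each edge with a path having $2m+1$ edges (equivalently, subdividing each edge $2m$ times), and let $L(G)$ be its line graph. I will prove (A) that $L(G)$ is $(F_1,\dots,F_p)$-free for every cubic~$G_0$, and (B) that {\sc Clique Cover} is \NP-hard on the class of all such line graphs~$L(G)$; together with the previous paragraph this gives the theorem. For~(B): the girth of~$G$ is $(2m+1)\cdot\mathrm{girth}(G_0)\ge 3(2m+1)$, so~$G$ is triangle-free, and hence every clique of~$L(G)$ is a \emph{star}, i.e.\ a set of edges of~$G$ sharing a common vertex. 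A partition of~$E(G)$ into cliques of~$L(G)$ is therefore a partition of~$E(G)$ into stars, and the minimum number of stars needed equals the vertex cover number~$\tau(G)$: the centres of the stars form a vertex cover, and conversely a vertex cover yields a star partition by assigning each edge to one of its endpoints in the cover. So {\sc Clique Cover} on~$L(G)$ \emph{is} {\sc Vertex Cover} on~$G$. A short exchange argument shows that subdividing a single edge twice raises the vertex cover number by exactly~$1$; iterating, $\tau(G_0^{(2m)})=\tau(G_0)+m\,|E(G_0)|$, which is computable from~$G_0$, so (B) follows from the \NP-hardness of {\sc Vertex Cover} on cubic graphs.

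Claim~(A) is the crux, and it is here that the class~${\cal T}$ turns out to be exactly the right one. The line graph~$L(G)$ is claw-free; every vertex of~$G$ has degree~$2$ or~$3$, and the degree-$3$ vertices of~$G$ (the branch vertices of the subdivision) are pairwise non-adjacent, so~$L(G)$ has maximum degree~$3$ and is thus $(K_4,K_4-e)$-free; $L(G)$ has no induced cycle of length in $\{4,\dots,m\}$, because the girth of~$G$ exceeds~$m$; and any two triangles of~$L(G)$ lie at distance greater than~$m$, because the triangles of~$L(G)$ correspond to the degree-$3$ vertices of~$G$, which are pairwise at distance at least~$2m$. Now suppose some~$F_i$ were an induced subgraph of~$L(G)$, and choose a component~$C$ of~$F_i$ that is neither a path nor a~$T_{h,i,j}$; such a~$C$ exists since $F_i\notin{\cal T}$, and $|V(C)|\le m$. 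If~$C$ is acyclic it has a vertex of degree~$\ge3$, hence an induced claw, a contradiction. Otherwise, using that~$C$ is claw-, $K_4$- and $(K_4-e)$-free of maximum degree~$3$, every degree-$3$ vertex of~$C$ lies in a triangle of~$C$; as~$C$ has at most one triangle (two triangles of~$C$ would be within distance~$m$ in~$L(G)$), it follows that~$C$ is either an induced cycle, necessarily a~$C_\ell$ with $4\le\ell\le m$ and so excluded above, or a single triangle carrying at most one pendant path at each corner, i.e.\ a~$T_{h,i,j}$. Either way we contradict the choice of~$C$, which proves~(A).

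The main obstacle is step~(A): one has to show that ``not in~${\cal T}$'' is precisely the property that forces one of the bounded obstructions --- an induced claw, a short induced cycle, or a triangle configuration other than ``one triangle with three pendant paths'' --- all of which the line graphs $L(G_0^{(2m)})$ were designed to avoid. The supporting facts (triangles of a line graph come from stars of the base graph; a claw-, $K_4$-, $(K_4-e)$-free graph of maximum degree~$3$ has all its degree-$3$ vertices inside triangles; the behaviour of~$\tau$ under double subdivision) are each routine, but putting them together --- and, crucially, choosing the subdivision length large enough relative to $\max_i|V(F_i)|$ so that every ``long'' obstruction is genuinely killed while the reduction stays polynomial --- is the technical heart of the argument.
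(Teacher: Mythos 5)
The paper does not actually prove this statement---it is quoted from Schindl~\cite{Sc05}---so there is no in-paper proof to compare against. Judged on its own terms, your argument is the right one and, as far as I can tell, essentially the one behind the cited result: pass to complements, turn {\sc Clique Cover} on line graphs of long even subdivisions of cubic graphs into {\sc Vertex Cover} via the star-partition correspondence (valid because the subdivision is triangle-free), use the classical fact that doubly subdividing an edge raises~$\tau$ by exactly one, and show structurally that every connected induced subgraph of $L(G_0^{(2m)})$ on at most~$m$ vertices is a path or a~$T_{h,i,j}$. These steps are sound. One local slip: $L(G)$ being $(K_4,K_4-e)$-free does not follow from ``maximum degree~$3$'' (a graph of maximum degree~$3$ can contain an induced diamond); the correct reason is that $L(G)$ is the line graph of a triangle-free graph of maximum degree~$3$, so its cliques are stars of~$G$ of size at most~$3$ and no two of its triangles can share an edge.

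One point deserves explicit attention. Your sentence ``choose a component~$C$ of~$F_i$ that is neither a path nor a~$T_{h,i,j}$; such a~$C$ exists since $F_i\notin{\cal T}$'' tacitly allows $h=0$, whereas the paper defines~${\cal T}$ using $1\leq h\leq i\leq j$. Under that literal definition your existence claim fails (take $F_i=C_3=T_{0,0,0}$: it is not in~${\cal T}$ as printed, yet its only component is a~$T_{h,i,j}$ in your sense, and indeed $C_3$ does occur as an induced subgraph of your graphs~$L(G)$). But the theorem as literally printed is itself false: $\overline{3P_1}=T_{0,0,0}$ is not in~${\cal T}$ as printed, while {\sc Colouring} is polynomial-time solvable on $3P_1$-free graphs by the dichotomy of Kr\'al' et al.~\cite{KKTW01}. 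The bound $1\leq h$ is evidently a typo for $0\leq h$: the paper's own applications of the theorem in the proof of Theorem~\ref{t-p5} and in Open Problem~\ref{o-p5} treat $T_{0,0,0}$, $T_{0,0,1}$, $T_{0,1,1}$ and $T_{0,0,2}$ as admissible components of graphs in~${\cal T}$. Your proof establishes exactly the corrected statement, which is the one the paper needs; you should state the correction explicitly rather than leave the mismatch implicit.
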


\section{Almost Classified Graphs}\label{s-almost}

In this section we prove the following result, from which it immediately follows that every connected graph apart from~$K_{1,3}$ and~$P_5$ is almost classified.
In Section~\ref{s-con} we discuss why~$K_{1,3}$ and~$P_5$ are not almost classified.

\begin{theorem}\label{t-almost1}
There are only finitely many pairs $(H_1,H_2)$ of connected graphs with $\{H_1,H_2\} \cap \{K_{1,3},P_5\}=\emptyset$, such that the complexity of {\sc Colouring} on $(H_1,H_2)$-free graphs is unknown.
\end{theorem}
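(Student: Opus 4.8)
\noindent\emph{Proof idea.}
The plan is to use that two connected forbidden induced subgraphs heavily constrain the shapes of~$H_1$ and~$H_2$, and to run a case analysis in which all but finitely many pairs are settled by results already in the literature. Three ingredients carry most of the weight.
\emph{(i)} If one of~$H_1,H_2$ is an induced subgraph of~$P_4$, then $(H_1,H_2)$-free graphs are $P_4$-free, hence perfect, and {\sc Colouring} is polynomial-time solvable by~\cite{GLS84}; with $\{H_1,H_2\}\cap\{K_{1,3},P_5\}=\emptyset$ this lets us assume that neither~$H_i$ is an induced subgraph of~$P_4$, nor~$K_{1,3}$, nor~$P_5$.
\emph{(ii)} It is known that {\sc $3$-Colouring}, and hence {\sc Colouring}, is \NP-complete on graphs of girth at least~$g$ for every fixed~$g$; so if both~$H_1$ and~$H_2$ contain an induced cycle, then the graphs of girth exceeding $|V(H_1)|+|V(H_2)|$ form an \NP-hard $(H_1,H_2)$-free subclass, and we may assume that one of them, say~$H_2$, is a tree.
\emph{(iii)} By Theorem~\ref{t-schindl}, {\sc Colouring} is \NP-complete on $(H_1,H_2)$-free graphs whenever $\overline{H_1}\notin{\cal T}$ and $\overline{H_2}\notin{\cal T}$.

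The next step is to localise~$H_1$. A short check shows that the only trees~$T$ with $\overline T\in{\cal T}$ are $P_1,P_2,P_3,P_4$ and~$K_{1,3}$: each~$T_{h,i,j}$ has as many edges as vertices, so its complement contains a cycle; and a disconnected complement $\overline T$ makes~$T$ a join, which contains an induced~$C_4$ as soon as two of its factors have at least two vertices, so that the remaining possibilities are exactly these five trees. Since our~$H_2$ is none of them, $\overline{H_2}\notin{\cal T}$, so by ingredient~\emph{(iii)} every still-open pair satisfies $\overline{H_1}\in{\cal T}$; thus $\overline{H_1}$ is a disjoint union of paths and graphs~$T_{h,i,j}$, and, since~$H_1$ is connected, a join of complements of such graphs. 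By the same check $H_1$ is not itself a tree, so it contains an induced cycle.

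It then remains to show that only finitely many pairs survive, with $\overline{H_1}\in{\cal T}$, with~$H_2$ a tree on at least five vertices other than~$P_5$, and with neither graph an induced subgraph of~$P_4$ or equal to~$K_{1,3}$. Here $H_2$ is either a long path~$P_r$ ($r\ge 6$) or, having a vertex of degree at least three, contains an induced~$K_{1,3}$; and $\overline{H_1}\in{\cal T}$ pins~$H_1$ to finitely many families once the sizes of the pieces of~$\overline{H_1}$ are bounded --- for instance~$H_1$ could be a complete graph, a complete multipartite graph with parts of size at most three, the paw, a complement~$\overline{P_r}$ of a path, or a join of such graphs. For each family one either cites a known bigenic classification or reduces to a simpler class: paw-free graphs, for example, decompose into complete multipartite parts and triangle-free parts, so ``paw versus~$H_2$'' inherits the complexity of ``$C_3$ versus~$H_2$'', which is already classified for all but finitely many trees~$H_2$; and {\sc Colouring} on $K_t$-free graphs with $t\ge 4$ is just general {\sc $(t-1)$-Colouring}, which is \NP-complete on $H_2$-free graphs once~$H_2$ is a long enough path, by results descending from~\cite{KKTW01}. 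Whenever a size parameter of one of $H_1,H_2$ passes a threshold, the class reduces to one already resolved --- through a larger clique or a longer path, through a reappearing induced cycle together with ingredient~\emph{(ii)}, or through Theorem~\ref{t-schindl} applied to a closely related pair. What is left is a finite list of small graphs~$H_1$, and for each of them the literature fixes the complexity of {\sc Colouring} on $(H_1,H_2)$-free graphs for all but finitely many trees~$H_2$; the union of the exceptional companions gives the required finite set.

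The step I expect to be the main obstacle is precisely this residual analysis: one must check that no infinite family of open pairs hides on the ``dense'' side $\overline{H_1}\in{\cal T}$ --- in particular that the families $\{K_t\}_t$, $\{\overline{P_r}\}_r$, and the various joins each contribute only finitely many open pairs --- and that the bookkeeping genuinely terminates rather than merely shrinking. This is also where the exclusion of~$K_{1,3}$ and~$P_5$ is indispensable: as explained in Section~\ref{s-con}, each of these two graphs does admit an \emph{infinite} family of connected companions~$H_2$ for which {\sc Colouring} on $(H_1,H_2)$-free graphs is open, and only after removing them does the residual region collapse to a finite set.
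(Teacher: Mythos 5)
Your opening reductions are sound and partly coincide with the paper's: the $P_4$/perfect-graph step, and the large-girth argument of~\cite{EHK98} forcing one of the two graphs (say~$H_2$) to be a tree, which together with the exclusions leaves~$H_2$ a tree on at least five vertices other than~$P_5$. Your use of Theorem~\ref{t-schindl} to deduce $\overline{H_1}\in{\cal T}$ is a legitimate extra step the paper does not take here (an edge count does show that no tree on five or more vertices has its complement in~${\cal T}$; note though that $\overline{K_{1,3}}=K_3+P_1\notin{\cal T}$, since the definition of~${\cal T}$ requires $h\geq 1$). The problem is that everything after that point --- the ``residual analysis'' you flag as the main obstacle --- is the actual content of the theorem, and you do not carry it out. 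You list sample families for~$H_1$ (complete graphs, complete multipartite graphs, the paw, $\overline{P_r}$, joins) and assert that each ``either cites a known bigenic classification or reduces to a simpler class,'' but the region $\{\,\overline{H_1}\in{\cal T}\,\}\times\{\,\text{trees on}\geq 5\text{ vertices}\,\}$ is infinite in several independent directions, and no argument is given that only finitely many pairs escape the known results. The claim that ``$C_3$ versus~$H_2$'' is already classified for all but finitely many trees~$H_2$ is itself an instance of what must be proved, not a citation.

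Concretely, what is missing is a \emph{finiteness mechanism}. The paper gets one by first invoking Holyer's theorem (if both graphs contain an induced~$K_{1,3}$, then {\sc $3$-Colouring} is \NP-complete), so that one of the two graphs may be assumed claw-free; it then uses a short list of specific hardness results ($(P_{22},C_3)$, $(P_9,C_4)$, $(2P_2,C_r)$, $(P_6,K_6)$, $(4P_1,2P_1+P_2)$, $(K_{1,5},C_3)$, $(K_{1,3},C_r)$, $(K_{1,3},K_4)$) to force the surviving companion graph to exclude both a fixed clique and a fixed independent set, at which point Ramsey's theorem bounds its number of vertices. Your sketch never produces such a simultaneous clique/independent-set exclusion (nor any substitute for it), and it omits the Holyer step entirely, so nothing prevents, say, an infinite family of claw-containing trees~$H_2$ paired with a fixed dense~$H_1$ from remaining open in your scheme. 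Until the case analysis is actually executed with named hardness results and a Ramsey-type (or equivalent) size bound at the end of each branch, the proof is incomplete. (A small additional slip: {\sc Colouring} on $K_t$-free graphs is not ``just general $(t-1)$-{\sc Colouring}''; the correct reduction, as in the paper, is that \NP-hardness of $(t-1)$-{\sc Colouring} on a class transfers to {\sc Colouring} on the $K_t$-free members of that class because graphs containing~$K_t$ are trivially not $(t-1)$-colourable.)
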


\begin{proof}
We first make a useful observation.
Let~$H$ be a tree that is not isomorphic to~$K_{1,3}$ or~$P_5$ and that is not an induced subgraph of~$P_4$.
If~$H$ contains a vertex of degree at least~$4$ then it contains an induced~$K_{1,4}$.
If~$H$ has maximum degree~$3$, then since~$H$ is connected and not isomorphic to~$K_{1,3}$, it must contain an induced~$S_{1,1,2}$.
If~$H$ has maximum degree at most~$2$, then it is a path, and since it is not isomorphic to~$P_5$ and not an induced subgraph of~$P_4$, it follows that~$H$ must be a path on at least six vertices.
We conclude that if~$H$ is a tree that is not isomorphic to~$K_{1,3}$ or~$P_5$ and that is not an induced subgraph of~$P_4$, then~$H$ contains~$K_{1,4}$ or~$S_{1,1,2}$ as an induced subgraph or~$H$ is a path on at least six vertices.

\medskip
\noindent
Now let $(H_1,H_2)$ be a pair of connected graphs with $\{H_1,H_2\} \cap \{K_{1,3},P_5\}=\emptyset$.
If~$H_1$ or~$H_2$ is an induced subgraph of~$P_4$, then {\sc Colouring} is polynomial-time solvable for $(H_1,H_2)$-free graphs, as {\sc Colouring} is polynomial-time solvable for $P_4$-free graphs (see, for example,~\cite{KKTW01}).
Hence we may assume that this is not the case.
If~$H_1$ and~$H_2$ both contain at least one cycle~\cite{EHK98} or both contain an induced~$K_{1,3}$~\cite{Ho81}, then even {\sc $3$-Colouring} is \NP-complete.
Hence we may also assume that at least one of $H_1,H_2$ is a tree and that at least one of $H_1,H_2$ is a $K_{1,3}$-free graph.
This leads, without loss of generality, to the following two cases.

\medskip
\noindent
{\bf Case 1.} $H_1$ is a tree and $K_{1,3}$-free.\\
Then~$H_1$ is a path.
First suppose that~$H_1$ has at least~22 vertices.
It is known that {\sc $4$-Colouring} is \NP-complete for $(P_{22},C_3)$-free graphs~\cite{HJP14} and that {\sc Colouring} is \NP-complete for $(P_9,C_4)$-free graphs~\cite{GHP} and for $(2P_2,C_r)$-free graphs for all $r \geq 5$~\cite{KKTW01}.
Hence we may assume that~$H_2$ is a tree.
By the observation at the start of the proof, this implies that~$H_2$ contains an induced~$K_{1,4}$, $S_{1,1,2}$ or~$P_6$.
Therefore~$H$ contains an induced~$4P_1$ or~$2P_1+\nobreak P_2$.
Since~$H_1$ is a path on at least~22 vertices, $H_1$ contains an induced~$4P_1$.
As {\sc Colouring} is \NP-complete for $(4P_1,2P_1+\nobreak P_2)$-free graphs~\cite{KKTW01}, {\sc Colouring} is \NP-complete for $(H_1,H_2)$-free graphs.

Now suppose that~$H_1$ has at most~21 vertices.
By the observation at the start of the proof, $H_1$ is a path on at least six vertices.
It is known that {\sc $5$-Colouring} is \NP-complete for $P_6$-free graphs~\cite{Hu16}.
As~$K_6$ is not $5$-colourable, this means that {\sc $5$-Colouring} is \NP-complete for $(P_6,K_6)$-free graphs, as observed in~\cite{GJPS}.
Therefore we may assume that~$H_2$ is $K_6$-free.
Recall that {\sc Colouring} is \NP-complete for $(2P_1+\nobreak P_2,4P_1)$-free graphs~\cite{KKTW01}, which are contained in the class of $(P_6,4P_1)$-free graphs.
Therefore we may assume that~$H_2$ is $4P_1$-free.
Since~$H_2$ is $(K_6,4P_1)$-free, Ramsey's Theorem~\cite{Ra30} implies that~$|V(H_2)|$ is bounded by a constant.
We conclude that both~$H_1$ and~$H_2$ have size bounded by a constant.

\medskip
\noindent
{\bf Case 2.} $H_1$ is a tree and not $K_{1,3}$-free, and~$H_2$ is $K_{1,3}$-free and not a tree.\\
Then~$H_1$ contains a vertex of degree at least~$3$ and~$H_2$ contains an induced cycle~$C_r$ for some $r\geq 3$.
It is known that {\sc $3$-Colouring} is \NP-complete for $(K_{1,5},C_3)$-free graphs~\cite{MF96} and for $(K_{1,3},C_r)$-free graphs whenever $r \geq 4$~\cite{KKTW01}.
We may therefore assume that~$H_1$ is a tree of maximum degree at most~$4$ and that~$H_2$ contains at least one induced~$C_3$ but no induced cycles on more than three vertices.
Recall that {\sc $4$-Colouring} is \NP-complete for $(P_{22},C_3)$-free graphs~\cite{HJP14}.
Hence we may assume that~$H_1$ is a $P_{22}$-free tree.
As~$H_1$ has maximum degree at most~$4$, we find that~$H_1$ has a bounded number of vertices.

By assumption, $H_1$ contains a vertex of degree at least~$3$.
As {\sc Colouring} is \NP-complete for $(K_{1,3},K_4)$-free graphs~\cite{KKTW01}, we may assume that~$H_2$ is $K_4$-free.
By the observation at the start of the proof, $H_1$ must contain an induced~$K_{1,4}$ or~$S_{1,1,2}$.
Recall that {\sc Colouring} is \NP-complete for the class of $(2P_1+\nobreak P_2,4P_1)$-free graphs~\cite{KKTW01}, which is contained in the class of~$(K_{1,4},S_{1,1,2},4P_1)$-free graphs.
Hence we may assume that~$H_2$ is $4P_1$-free.
Since~$H_2$ is $(K_4,4P_1)$-free, Ramsey's Theorem~\cite{Ra30} implies that~$|V(H_2)|$ is bounded by a constant.
Again, we conclude that in this case both~$H_1$ and~$H_2$ have size bounded by a constant.\qed
\end{proof}

\begin{corollary}\label{t-almost}
Every connected graph apart from~$K_{1,3}$ and~$P_5$ is almost classified.
\end{corollary}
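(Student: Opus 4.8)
The plan is to deduce this immediately from Theorem~\ref{t-almost1}, essentially by unwinding the definition of \emph{almost classified} and handling the two exceptional graphs separately. First I would fix an arbitrary connected graph $H_1$ with $H_1\notin\{K_{1,3},P_5\}$; the goal is to show that {\sc Colouring} on $(H_1,H_2)$-free graphs is polynomial-time solvable or \NP-complete for all but finitely many connected graphs~$H_2$, which is exactly what it means for $H_1$ to be almost classified.

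Next I would partition the connected graphs $H_2$ into two groups. The first group consists of the at most two graphs $H_2\in\{K_{1,3},P_5\}$; this is trivially a finite set, irrespective of whether the complexity of {\sc Colouring} on the corresponding bigenic class is currently known. The second group consists of all connected $H_2$ with $H_2\notin\{K_{1,3},P_5\}$. For every such~$H_2$ we have $\{H_1,H_2\}\cap\{K_{1,3},P_5\}=\emptyset$, because $H_1$ was also chosen outside this set, so Theorem~\ref{t-almost1} applies to the pair $(H_1,H_2)$. By that theorem there are only finitely many pairs of connected graphs satisfying this condition for which the complexity of {\sc Colouring} is unknown; in particular, for our fixed~$H_1$, only finitely many $H_2$ in the second group yield an unknown complexity.

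Taking the union of the two groups, I would conclude that {\sc Colouring} on $(H_1,H_2)$-free graphs has known complexity for all but finitely many connected graphs~$H_2$, i.e.\ $H_1$ is almost classified; since $H_1$ was an arbitrary connected graph distinct from $K_{1,3}$ and $P_5$, the corollary follows. I do not expect any genuine obstacle here, as this is a direct corollary; the only point to be careful about is that the symmetric hypothesis $\{H_1,H_2\}\cap\{K_{1,3},P_5\}=\emptyset$ in Theorem~\ref{t-almost1} is being invoked with $H_1$ already pinned down outside $\{K_{1,3},P_5\}$, so it collapses to the single condition $H_2\notin\{K_{1,3},P_5\}$, which is precisely why the two exceptional graphs are split off into their own finite group. (The reason $K_{1,3}$ and $P_5$ themselves fail to be almost classified is not needed for this corollary and is addressed separately in Section~\ref{s-con}.)
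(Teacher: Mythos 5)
Your proposal is correct and matches the paper's intent exactly: the paper gives no separate proof of the corollary, stating only that it ``immediately follows'' from Theorem~\ref{t-almost1}, and your argument is precisely that immediate deduction (split off the finitely many $H_2\in\{K_{1,3},P_5\}$ and apply the theorem to the rest). No issues.
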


\section{Hardness Results}\label{s-hard}

In this section we prove that {\sc Colouring} restricted to $(2P_2,H)$-free graphs is \NP-complete for several graphs~$H$.
To prove our results we adapt a hardness construction from Golovach and Heggernes~\cite{GH09} for proving that {\sc List Colouring} is \NP-complete for complete bipartite graphs.
As observed in~\cite{GP14}, a minor modification of this construction yields that {\sc List Colouring} is \NP-complete for complete split graphs, which are the graphs obtained from complete bipartite graphs by changing one of the bipartition classes into a clique.

We first describe the construction of~\cite{GH09}, which uses a reduction from the \NP-complete~\cite{Sc78} problem {\sc Not-All-Equal $3$-Satisfiability} with positive literals only.
To define this problem, let $X= \{x_1,x_2,\ldots,x_n\}$ be a set of logical variables, and let ${\cal C} = \{C_1, C_2,\ldots, C_m\}$ be a set of $3$-literal clauses over~$X$ in which all literals are positive and every literal appears at most once in each clause.
The question is whether~$X$ has a truth assignment such that each clause in~${\cal C}$ contains at least one true literal and at least one false literal.
If so, we say that such a truth assignment is {\em satisfying}.

Let $(X,{\cal C})$ be an instance of {\sc Not-All-Equal $3$-Satisfiability} with positive literals only.
We construct an instance $(G_1,L)$ of {\sc List Colouring} as follows.
For each~$x_i$ we introduce a vertex, which we also denote by~$x_i$ and which we say is of {\em $x$-type}.
We define $L(x_1)=\{1,2\}$, $L(x_2)=\{3,4\},\ldots,L(x_n)=\{2n-1,2n\}$.
In this way, each~$x_i$ has one odd colour and one even colour in its list, and all lists~$L(x_i)$ are pairwise disjoint.
For each~$C_j$ we introduce two vertices, which we denote by~$C_j$ and~$C_j'$ and which we say are of {\em $C$-type}.
If $C_j=\{x_g,x_h,x_i\}$ with $L(x_g)=\{a,a+\nobreak 1\}$, $L(x_h)=\{b,b+\nobreak 1\}$ and $L(x_i)=\{c,c+\nobreak 1\}$, then we set $L(C_j)=\{a,b,c\}$ and $L(C_j')=\{a+\nobreak 1,b+\nobreak 1,c+\nobreak 1\}$.
Hence each~$C_j$ has only odd colours in its list and each~$C_j'$ has only even colours in its list.
To obtain the graph~$G_1$ we add an edge between every vertex of $x$-type and every vertex of $C$-type.
Note that~$G_1$ is a complete bipartite graph with bipartition classes $\{x_1,\ldots,x_n\}$ and $\{C_1,\ldots,C_m\}\cup \{C_1',\ldots,C_m'\}$.

We also construct an instance $(G_2,L)$ where~$G_2$ is obtained from~$G_1$ by adding edges between every pair of vertices of $x$-type.
Note that~$G_2$ is a complete split graph.

The following lemma is straightforward.
We refer to~\cite{GH09} for a proof for the case involving~$G_1$.
The case involving~$G_2$ follows from this proof and the fact that the lists~$L(x_i)$ are pairwise disjoint, as observed in~\cite{GP14}.

\begin{lemma}[\cite{GH09}]\label{l-true}
$({\cal C},X)$ has a satisfying truth assignment if and only if~$G_1$ has a colouring that respects~$L$ if and only if~$G_2$ has a colouring that respects~$L$.
\end{lemma}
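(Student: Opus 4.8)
The final statement to prove is Lemma~\ref{l-true}, the equivalence between satisfying truth assignments of $({\cal C},X)$ and list colourings of $G_1$ (equivalently $G_2$).

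\medskip\noindent\textbf{Plan.} The plan is to prove the chain of equivalences in two halves, handling the $G_1$ case directly and then deriving the $G_2$ case from it. I would first set up notation: for a clause $C_j=\{x_g,x_h,x_i\}$ with $L(x_g)=\{a,a+1\}$, $L(x_h)=\{b,b+1\}$, $L(x_i)=\{c,c+1\}$, recall $L(C_j)=\{a,b,c\}$ (all odd), $L(C_j')=\{a+1,b+1,c+1\}$ (all even). The key observation to isolate first is the \emph{colour/value correspondence}: since the lists $L(x_i)$ are pairwise disjoint and each consists of one odd and one even colour, any list colouring $c$ of $G_1$ induces a truth assignment $\phi$ on $X$ by declaring $x_i$ \emph{true} if $c(x_i)$ is odd and \emph{false} if $c(x_i)$ is even, and conversely every truth assignment arises this way from a unique choice of colours on the $x$-type vertices.

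\medskip\noindent\textbf{Forward direction ($\Rightarrow$).} Given a satisfying assignment $\phi$, I would colour each $x_i$ with its odd colour if $\phi(x_i)$ is true and with its even colour otherwise. Then for each clause $C_j=\{x_g,x_h,x_i\}$: since $\phi$ makes at least one literal false, at least one of $x_g,x_h,x_i$ received an even colour, so at least one of $a,b,c$ is \emph{not} used by an $x$-type vertex; pick such a value for $c(C_j)\in L(C_j)=\{a,b,c\}$. Symmetrically, since at least one literal is true, at least one of $x_g,x_h,x_i$ got an odd colour, hence at least one of $a+1,b+1,c+1$ is free, and we assign it to $C_j'$. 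Because $G_1$ is complete bipartite with $x$-type on one side and $C$-type on the other, the only edges are between $x$-type and $C$-type vertices, and by construction no $C$-type vertex shares its colour with any $x$-type vertex, so this is a proper colouring respecting $L$.

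\medskip\noindent\textbf{Reverse direction ($\Leftarrow$) and the $G_2$ case.} Conversely, given a colouring $c$ of $G_1$ respecting $L$, let $\phi$ be the induced assignment as above. For any clause $C_j=\{x_g,x_h,x_i\}$, the vertex $C_j$ gets a colour in $\{a,b,c\}$ distinct (edge constraint) from $c(x_g),c(x_h),c(x_i)$; say $c(C_j)=a$, so $c(x_g)\ne a$, forcing $c(x_g)=a+1$ (even), i.e.\ $\phi(x_g)$ false — so $C_j$ has a false literal. The same argument applied to $C_j'$ with its even list shows $C_j$ has a true literal, so $\phi$ is satisfying. Finally, for $G_2$: since $G_2$ is obtained from $G_1$ by adding all edges among the $x$-type vertices, any colouring of $G_2$ respecting $L$ is also one of $G_1$, giving one implication for free; for the other, the colouring of $G_1$ constructed in the forward direction already assigns the $x_i$ pairwise distinct colours (their lists are disjoint), so it remains proper in $G_2$. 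This closes the full chain of equivalences. I expect no single hard step here — the argument is genuinely routine once the colour/value correspondence and the bipartite structure are made explicit; the only thing to be careful about is verifying in both directions that the "at least one true and at least one false literal" condition corresponds exactly to "at least one odd and at least one even colour among the three $x$-type neighbours", which is precisely what the odd-only list of $C_j$ and even-only list of $C_j'$ encode.
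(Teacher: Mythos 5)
Your proof is correct and is exactly the routine argument the paper alludes to: the paper does not write out a proof, instead calling the lemma straightforward and deferring the $G_1$ case to \cite{GH09}, with the $G_2$ case following "from this proof and the fact that the lists $L(x_i)$ are pairwise disjoint" — which is precisely how you handle it. Your odd/even colour--truth-value correspondence, the use of disjoint lists to ensure the chosen colour for $C_j$ (resp.\ $C_j'$) clashes with no $x$-type vertex, and the containment $G_1\subseteq G_2$ for the converse all match the intended argument.
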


We now extend~$G_1$ and~$G_2$ into graphs~$G'_1$ and~$G'_2$, respectively, by adding a clique~$K$ consisting of~$2n$ new vertices $k_1,\ldots,k_{2n}$ and by adding an edge between a vertex~$k_\ell$ and a vertex~$u$ of the original graph if and only if $\ell\notin L(u)$.
We say that the vertices $k_1,\ldots,k_{2n}$ are of {\em $k$-type}.

\begin{lemma}\label{l-truth}
$({\cal C},X)$ has a satisfying truth assignment if and only if~$G'_1$ has a $2n$-colouring if and only if~$G'_2$ has a $2n$-colouring.
\end{lemma}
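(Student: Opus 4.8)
The plan is to reduce Lemma~\ref{l-truth} to Lemma~\ref{l-true} by showing that the extra clique~$K$ forces, in any $2n$-colouring of~$G_1'$ (resp.\ $G_2'$), that the colours~$1,\ldots,2n$ are used on~$K$ in such a way that every original vertex~$u$ is effectively restricted to the list~$L(u)$. First I would observe that~$K$ is a clique on exactly~$2n$ vertices and that we only have~$2n$ colours available, so in any $2n$-colouring~$c$ the restriction of~$c$ to~$K$ is a bijection between~$\{k_1,\ldots,k_{2n}\}$ and~$\{1,\ldots,2n\}$; without loss of generality (by permuting colour names) we may assume $c(k_\ell)=\ell$ for each~$\ell$. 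Then, by construction, a vertex~$k_\ell$ is adjacent to an original vertex~$u$ precisely when $\ell\notin L(u)$, so $c(u)$ cannot equal any~$\ell\notin L(u)$; hence $c(u)\in L(u)$. Thus the restriction of~$c$ to the original vertex set is a colouring that respects~$L$, and it is proper in~$G_1$ (resp.\ $G_2$) since $G_1'[V(G_1)]=G_1$ (resp.\ $G_2'[V(G_2)]=G_2$).

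Conversely, I would take a colouring~$c$ of~$G_1$ (resp.\ $G_2$) respecting~$L$, which exists iff $(\mathcal{C},X)$ is satisfiable by Lemma~\ref{l-true}, and extend it to~$G_1'$ (resp.\ $G_2'$) by setting $c(k_\ell)=\ell$ for all~$\ell\in\{1,\ldots,2n\}$. This is clearly proper on~$K$. For an edge~$k_\ell u$ with~$u$ an original vertex, such an edge exists only when $\ell\notin L(u)$, and since $c(u)\in L(u)$ we have $c(u)\neq\ell=c(k_\ell)$, so no monochromatic edge is created. Hence $c$ is a proper $2n$-colouring of~$G_1'$ (resp.\ $G_2'$). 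Combining the two directions with Lemma~\ref{l-true} gives the chain of equivalences in the statement.

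The one point that needs a little care, rather than being the ``hard part'', is the normalisation step where I assume $c(k_\ell)=\ell$: this is legitimate because {\sc Colouring} is invariant under renaming colours, but I should state it explicitly, and I should double-check that after this renaming the colours appearing on the $x$-type and $C$-type vertices are still exactly the ones named in the lists~$L(\cdot)$ — which they are, since those lists are sublists of~$\{1,\ldots,2n\}$ and the renaming was chosen to fix~$K$ pointwise to that set. The rest is a direct verification of adjacency conditions from the construction, so there is no genuine obstacle; the lemma is, as the paper says, essentially straightforward once one notices that the clique~$K$ simulates the lists.
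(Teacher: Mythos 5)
Your proposal is correct and follows essentially the same route as the paper's proof: extend a list-respecting colouring by setting $c(k_\ell)=\ell$, and conversely use the fact that $K$ is a $2n$-clique to normalise any $2n$-colouring so that $c(k_\ell)=\ell$, whence the adjacencies to $K$ force $c(u)\in L(u)$ and Lemma~\ref{l-true} finishes the argument. The extra care you take with the colour renaming is exactly the "without loss of generality" step the paper uses implicitly.
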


\begin{proof}
Let $i\in \{1,2\}$.
By Lemma~\ref{l-true}, we only need to show that~$G_i$ has a colouring that respects~$L$ if and only if~$G'_i$ has a $2n$-colouring.
First suppose that~$G_i$ has a colouring~$c$ that respects~$L$.
We extend~$c$ to a colouring~$c'$ of~$G'_i$ by setting $c'(k_\ell)=\ell$ for $\ell \in\{1,\ldots,2n\}$.
Now suppose that~$G'_i$ has a $2n$-colouring~$c'$.
As the $k$-type vertices form a clique, we may assume without loss of generality that $c'(k_\ell)=\ell$ for $\ell \in \{1,\ldots,2n\}$.
Hence the restriction of~$c'$ to~$G_i$ yields a colouring~$c$ that respects~$L$.\qed
\end{proof}

In the next two lemmas we show forbidden induced subgraphs in~$G_1'$ and~$G_2'$, respectively.
The complements of these forbidden graphs are shown in \figurenames~\ref{fig:co-G1-forb} and~\ref{fig:co-G2-forb}, respectively.

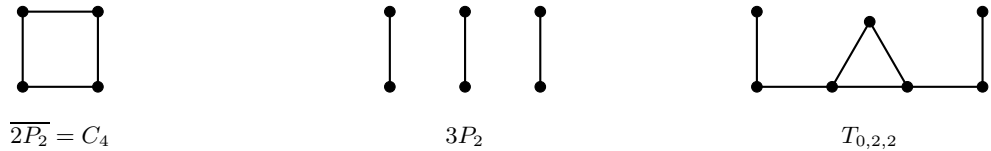
\begin{figure}
\begin{center}
\begin{tabular}{ccc}
\begin{minipage}{0.32\textwidth}
\begin{center}
\begin{tikzpicture}
\coordinate (x00) at (0,0) ;
\coordinate (x01) at (0,1) ;
\coordinate (x10) at (1,0) ;
\coordinate (x11) at (1,1) ;
\draw [fill=black] (x00) circle (2pt) ;
\draw [fill=black] (x01) circle (2pt) ;
\draw [fill=black] (x10) circle (2pt) ;
\draw [fill=black] (x11) circle (2pt) ;
\draw [thick] (x00) -- (x01) -- (x11) -- (x10) -- (x00);
\end{tikzpicture}
\end{center}
\end{minipage}
&
\begin{minipage}{0.32\textwidth}
\begin{center}
\begin{tikzpicture}
\coordinate (x00) at (0,0) ;
\coordinate (x01) at (0,1) ;
\coordinate (x10) at (1,0) ;
\coordinate (x11) at (1,1) ;
\coordinate (x20) at (2,0) ;
\coordinate (x21) at (2,1) ;
\draw [fill=black] (x00) circle (2pt) ;
\draw [fill=black] (x01) circle (2pt) ;
\draw [fill=black] (x10) circle (2pt) ;
\draw [fill=black] (x11) circle (2pt) ;
\draw [fill=black] (x20) circle (2pt) ;
\draw [fill=black] (x21) circle (2pt) ;
\draw [thick] (x00) -- (x01);
\draw [thick] (x10) -- (x11);
\draw [thick] (x20) -- (x21);
\end{tikzpicture}
\end{center}
\end{minipage}
&
\begin{minipage}{0.32\textwidth}
\begin{center}
\begin{tikzpicture}
\coordinate (a0) at (0.5,0) ;
\coordinate (a1) at (1.5,0) ;
\coordinate (a2) at (1.5,1) ;
\coordinate (b0) at (-0.5,0) ;
\coordinate (b1) at (-1.5,0) ;
\coordinate (b2) at (-1.5,1) ;
\coordinate (c0) at (0,0.86602540378) ;
\draw [fill=black] (a0) circle (2pt) ;
\draw [fill=black] (a1) circle (2pt) ;
\draw [fill=black] (a2) circle (2pt) ;
\draw [fill=black] (b0) circle (2pt) ;
\draw [fill=black] (b1) circle (2pt) ;
\draw [fill=black] (b2) circle (2pt) ;
\draw [fill=black] (c0) circle (2pt) ;
\draw [thick] (a2) -- (a1) -- (a0) -- (b0) -- (b1) -- (b2);
\draw [thick] (a0) -- (c0) -- (b0);
\end{tikzpicture}
\end{center}
\end{minipage}
\\
\\
$\overline{2P_2}=C_4$ & $3P_2$ & $T_{0,2,2}$
\end{tabular}
\end{center}
\caption{\label{fig:co-G1-forb}Graphs that are not induced subgraphs of the complement of~$G_1'$.}
\end{figure}

\begin{lemma}\label{l-s1}
The graph~$G_1'$ is $(2P_2,\overline{3P_2},\overline{T_{0,2,2}})$-free.
\end{lemma}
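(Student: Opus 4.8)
The plan is to exploit the coarse shape of $G_1'$. Its vertex set splits into the set $X$ of $x$-type vertices, the set $\mathcal{D}$ of $C$-type vertices, and the clique $K$ of $k$-type vertices; here $X$ and $\mathcal{D}$ are both independent sets, $X$ is complete to $\mathcal{D}$ (indeed $G_1'[X\cup\mathcal{D}]=G_1$ is complete bipartite), and the edges between $K$ and $X\cup\mathcal{D}$ are dictated by the lists. The only facts about the lists I will need are: each $L(x_i)$ has size~$2$ and contains one odd and one even colour, each $L(C_j)$ consists of odd colours only, and each $L(C_j')$ of even colours only; consequently, for every edge $uv$ of $G_1$ one has $L(u)\not\subseteq L(v)$.

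For $2P_2$-freeness I would first argue that no edge lying inside $K$ can be one of the two edges of an induced $2P_2$: if $k_\ell k_{\ell'}$ were such an edge, then the other edge $cd$ is anticomplete to $\{k_\ell,k_{\ell'}\}$, which forces $c,d\notin K$ and (as $cd$ is an edge and $X,\mathcal{D}$ are independent) $c\in X$, $d\in\mathcal{D}$ with $\{\ell,\ell'\}\subseteq L(c)\cap L(d)$; since $|L(c)|=2$ this yields $L(c)\subseteq L(d)$, a contradiction. Hence an induced $2P_2$ has at most one vertex in $K$ (two would be adjacent, so would form an edge inside $K$). If it has none, all four of its vertices lie in the complete bipartite graph $G_1$, which is $2P_2$-free; if it has exactly one, say $a\in K$, then its partner $b$ and the edge $cd$ avoid $K$, the edge $cd$ has one endpoint in $X$ and one in $\mathcal{D}$, and $b$ — lying in $X$ or in $\mathcal{D}$ — is adjacent to the endpoint of $cd$ on the opposite side, contradicting that there is no edge between $\{a,b\}$ and $\{c,d\}$.

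For $\overline{3P_2}$-freeness, recall that $\overline{3P_2}=K_{2,2,2}$ has three pairwise disjoint non-adjacent pairs of vertices, all other pairs being adjacent. If two of these pairs each met $X$, the corresponding two $x$-type vertices would be both non-adjacent (as $X$ is independent) and adjacent (as they lie in different pairs), which is impossible; so at most one pair meets $X$, and likewise at most one meets $\mathcal{D}$. Then some pair is contained in $K$; but $K$ is a clique, so that pair is adjacent, contradicting that it is a non-adjacent pair.

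The main effort goes into $\overline{T_{0,2,2}}$-freeness. Suppose $G_1'[S]\cong\overline{T_{0,2,2}}$ and put $S_X=S\cap X$, $S_{\mathcal{D}}=S\cap\mathcal{D}$, $S_K=S\cap K$. Since $T_{0,2,2}$ has independence number and clique number both equal to~$3$, each of $S_X$, $S_{\mathcal{D}}$ (independent in $G_1'$) and $S_K$ (a clique) has size at most~$3$. I would rule out $|S_X|=3$ as follows: the unique independent $3$-set of $\overline{T_{0,2,2}}$ is the (unique) triangle of $T_{0,2,2}$, so $S_X$ would be that triangle, leaving the remaining four vertices to induce $\overline{2P_2}=C_4$ in $\overline{T_{0,2,2}}$; but a $C_4$ cannot be partitioned into an independent pair and a clique pair, whereas $S_{\mathcal{D}}\cup S_K$ (each part of size at most~$2$ and summing to~$4$) would have to do exactly that. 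Symmetrically $|S_{\mathcal{D}}|\neq 3$, so $|S_X|=|S_{\mathcal{D}}|=2$ and $|S_K|=3$. Now $S_X$ is a non-edge of $\overline{T_{0,2,2}}$, i.e.\ an edge $e_X$ of $T_{0,2,2}$, and likewise $S_{\mathcal{D}}$ is an edge $e_{\mathcal{D}}$ of $T_{0,2,2}$; since $X$ is complete to $\mathcal{D}$, every vertex of $e_X$ is non-adjacent in $T_{0,2,2}$ to every vertex of $e_{\mathcal{D}}$, so $e_X\cup e_{\mathcal{D}}$ induces a $2P_2$ in $T_{0,2,2}$; and $S_K$ is then the complementary triple, which must be independent in $T_{0,2,2}$. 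A short finite check of the few induced copies of $2P_2$ in $T_{0,2,2}$ shows that the three vertices outside such a copy always induce at least one edge — the desired contradiction. I expect this last case, and in particular the bookkeeping pinning down $|S_X|=|S_{\mathcal{D}}|=2$, to be the only step requiring real care; the first two claims follow immediately from the ``two independent sets plus a clique'' shape of $G_1'$.
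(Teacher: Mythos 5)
Your proof is correct. It is essentially the paper's argument translated from $\overline{G_1'}$ to $G_1'$ itself: the $2P_2$- and $\overline{3P_2}$-freeness parts use exactly the same facts (the ``two independent sets complete to each other, plus a clique'' shape, together with the observation that an $x$-type and a $C$-type vertex cannot miss the same two $k$-type vertices because of the odd/even list parity) and reach the same contradictions as the paper's $C_4$- and $3P_2$-freeness arguments for the complement. The only genuinely different step is the $\overline{T_{0,2,2}}$ case: the paper traces where the individual vertices $a_0,\dots,c_0$ of a putative $T_{0,2,2}$ in $\overline{G_1'}$ must live and derives a contradiction from an $x$--$C$ adjacency, whereas you first pin down the sizes $|S_X|=|S_{\mathcal D}|=2$, $|S_K|=3$ by a clique/independence-number count and then reduce to the finite check that no induced $2P_2$ of $T_{0,2,2}$ has an independent complementary triple. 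I verified that check (the five induced $2P_2$'s of $T_{0,2,2}$ all leave an edge among the remaining three vertices), so your route goes through; it is slightly longer to set up but arguably more mechanical, and like the paper's it uses only the coarse tripartition and not the list structure for this case.
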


\begin{proof}
We will prove that~$\overline{G_1'}$ is $(C_4,3P_2,T_{0,2,2})$-free.
Observe that in~$\overline{G_1'}$, the set of $x$-type vertices is a clique, the set of $C$-type vertices is a clique and the set of $k$-type vertices is an independent set.
Furthermore, in~$\overline{G_1'}$, no $x$-type vertex is adjacent to a $C$-type vertex.

\medskip
\noindent
{\bf $\mathbf{C_4}$-freeness.}
For contradiction, suppose that~$\overline{G_1'}$ contains an induced subgraph~$H$ isomorphic to~$C_4$; say the vertices of~$H$ are $u_1,u_2,u_3,u_4$ in that order.
As the union of the set of $x$-type and $C$-type vertices induces a $P_3$-free graph in~$\overline{G_1'}$, there must be at least two vertices of the~$C_4$ that are neither $x$-type nor~$C$-type.
Since the $k$-type vertices form an independent set, we may assume without loss of generality that~$u_1$ and~$u_3$ are of $k$-type.
It follows that~$u_2$ and~$u_4$ cannot be of $k$-type.
As the set of $x$-type vertices and the set of $C$-type vertices each from a clique in~$\overline{G_1'}$, but~$u_2$ is non-adjacent to~$u_4$, we may assume without loss of generality that~$u_2$ is of $x$-type and~$u_4$ is of $C$-type.
Then~$u_4$ is adjacent to the two $k$-type neighbours of an $x$-type vertex, which correspond to an even and odd colour.
This is not possible as~$u_4$, being a $C$-type vertex, is adjacent in~$\overline{G_1'}$ to (exactly three) $k$-type vertices, which correspond either to even colours only or to odd colours only.
We conclude that~$\overline{G_1'}$ is $C_4$-free.

\medskip
\noindent
{\bf $\mathbf{3P_2}$-freeness.}
For contradiction, suppose that~$\overline{G_1'}$ contains an induced subgraph~$H$ isomorphic to~$3P_2$.
As the $C$-type vertices and $x$-type vertices each form a clique in~$\overline{G_1'}$, one edge of~$H$ must consist of two $k$-type vertices.
This is not possible, as $k$-type vertices form an independent set in~$\overline{G_1'}$.
We conclude that~$\overline{G_1'}$ is $3P_2$-free.

\medskip
\noindent
{\bf $\mathbf{T_{0,2,2}}$-freeness.}
For contradiction, suppose that~$\overline{G_1'}$ contains an induced subgraph~$H$ isomorphic to~$T_{0,2,2}$ with vertices $a_0,a_1,a_2,b_0,b_1,b_2,c_0$ and edges~$a_0b_0$, $b_0c_0$, $c_0a_0$, $a_0a_1$, $a_1a_2$, $b_0b_1$, $b_1b_2$.
As the $k$-type vertices form an independent set in~$\overline{G_1'}$, at least one of $a_1,a_2$ and at least one of $b_1,b_2$ is of $x$-type or $C$-type.
As the $x$-type vertices and the $C$-type vertices form cliques in~$\overline{G_1'}$, we may assume without loss of generality that at least one of $a_1,a_2$ is of $C$-type and at least one of $b_1,b_2$ is of $x$-type.
As the $C$-type vertices and the $x$-type vertices each form a clique in~$\overline{G_1'}$, this means that~$c_0$ must be of $k$-type, $a_0$ cannot be of $x$-type and~$b_0$ cannot be of $C$-type.
As $k$-type vertices form an independent set in~$\overline{G_1'}$, $a_0$ and~$b_0$ cannot be of $k$-type.
Therefore~$a_0$ is of~$C$-type and~$b_0$ is of $x$-type.
This is a contradiction, as $C$-type vertices are non-adjacent to $x$-type vertices.
We conclude that~$\overline{G_1'}$ is $T_{0,2,2}$-free.\qed
\end{proof}

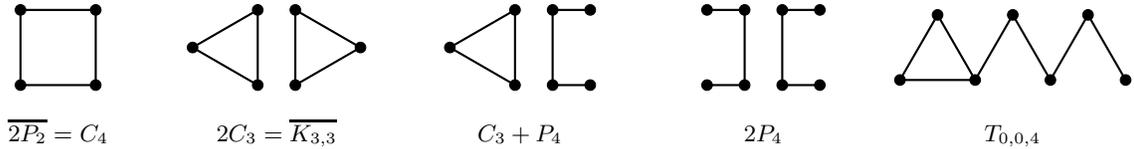
\begin{figure}
\begin{center}
\begin{tabular}{ccccc}
\begin{minipage}{0.15\textwidth}
\begin{center}
\begin{tikzpicture}
\coordinate (x00) at (0,0) ;
\coordinate (x01) at (0,1) ;
\coordinate (x10) at (1,0) ;
\coordinate (x11) at (1,1) ;
\draw [fill=black] (x00) circle (2pt) ;
\draw [fill=black] (x01) circle (2pt) ;
\draw [fill=black] (x10) circle (2pt) ;
\draw [fill=black] (x11) circle (2pt) ;
\draw [thick] (x00) -- (x01) -- (x11) -- (x10) -- (x00);
\end{tikzpicture}
\end{center}
\end{minipage}
&
\begin{minipage}{0.19\textwidth}
\begin{center}
\begin{tikzpicture}
\coordinate (x00) at (0,0) ;
\coordinate (x01) at (0,1) ;
\coordinate (x10) at (0.5,0) ;
\coordinate (x11) at (0.5,1) ;
\coordinate (x02) at (-0.86602540378,0.5) ;
\coordinate (x12) at (1.36602540378,0.5) ;
\draw [fill=black] (x00) circle (2pt) ;
\draw [fill=black] (x01) circle (2pt) ;
\draw [fill=black] (x10) circle (2pt) ;
\draw [fill=black] (x11) circle (2pt) ;
\draw [fill=black] (x02) circle (2pt) ;
\draw [fill=black] (x12) circle (2pt) ;
\draw [thick] (x00) -- (x01) -- (x02) -- (x00);
\draw [thick] (x10) -- (x11) -- (x12) -- (x10);
\end{tikzpicture}
\end{center}
\end{minipage}
&
\begin{minipage}{0.19\textwidth}
\begin{center}
\begin{tikzpicture}
\coordinate (x00) at (0,0) ;
\coordinate (x01) at (0,1) ;
\coordinate (x02) at (-0.86602540378,0.5) ;
\draw [fill=black] (x00) circle (2pt) ;
\draw [fill=black] (x01) circle (2pt) ;
\draw [fill=black] (x02) circle (2pt) ;
\draw [thick] (x00) -- (x01) -- (x02) -- (x00);
\coordinate (x20) at (0.5,0) ;
\coordinate (x21) at (0.5,1) ;
\coordinate (x30) at (1,0) ;
\coordinate (x31) at (1,1) ;
\draw [fill=black] (x20) circle (2pt) ;
\draw [fill=black] (x21) circle (2pt) ;
\draw [fill=black] (x30) circle (2pt) ;
\draw [fill=black] (x31) circle (2pt) ;
\draw [thick] (x31) -- (x21) -- (x20) -- (x30);
\end{tikzpicture}
\end{center}
\end{minipage}
&
\begin{minipage}{0.19\textwidth}
\begin{center}
\begin{tikzpicture}
\coordinate (x00) at (0,0) ;
\coordinate (x01) at (0,1) ;
\coordinate (x10) at (0.5,0) ;
\coordinate (x11) at (0.5,1) ;
\coordinate (x20) at (1,0) ;
\coordinate (x21) at (1,1) ;
\coordinate (x30) at (1.5,0) ;
\coordinate (x31) at (1.5,1) ;
\draw [fill=black] (x00) circle (2pt) ;
\draw [fill=black] (x01) circle (2pt) ;
\draw [fill=black] (x10) circle (2pt) ;
\draw [fill=black] (x11) circle (2pt) ;
\draw [fill=black] (x20) circle (2pt) ;
\draw [fill=black] (x21) circle (2pt) ;
\draw [fill=black] (x30) circle (2pt) ;
\draw [fill=black] (x31) circle (2pt) ;
\draw [thick] (x01) -- (x11) -- (x10) -- (x00);
\draw [thick] (x31) -- (x21) -- (x20) -- (x30);
\end{tikzpicture}
\end{center}
\end{minipage}
&
\begin{minipage}{0.20\textwidth}
\begin{center}
\begin{tikzpicture}
\coordinate (a0) at (0.5,0) ;
\coordinate (a1) at (1,0.86602540378) ;
\coordinate (a2) at (1.5,0) ;
\coordinate (a3) at (2,0.86602540378) ;
\coordinate (a4) at (2.5,0) ;
\coordinate (b0) at (-0.5,0) ;
\coordinate (c0) at (0,0.86602540378) ;
\draw [fill=black] (a0) circle (2pt) ;
\draw [fill=black] (a1) circle (2pt) ;
\draw [fill=black] (a2) circle (2pt) ;
\draw [fill=black] (a3) circle (2pt) ;
\draw [fill=black] (a4) circle (2pt) ;
\draw [fill=black] (b0) circle (2pt) ;
\draw [fill=black] (c0) circle (2pt) ;
\draw [thick] (a4) -- (a3) -- (a2) -- (a1) -- (a0) -- (b0);
\draw [thick] (a0) -- (c0) -- (b0);
\end{tikzpicture}
\end{center}
\end{minipage}
\\
\\
$\overline{2P_2}=C_4$ & $2C_3=\overline{K_{3,3}}$ & $C_3+\nobreak P_4$ & $2P_4$ & $T_{0,0,4}$
\end{tabular}
\end{center}
\caption{\label{fig:co-G2-forb}Graphs that are not induced subgraphs of the complement of~$G_2'$.}
\end{figure}

\begin{lemma}\label{l-s2}
The graph~$G_2'$ is $(2P_2,\overline{2C_3},\overline{C_3+P_4},\overline{2P_4},\overline{T_{0,0,4}})$-free.
\end{lemma}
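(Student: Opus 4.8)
The plan is to argue exactly as for Lemma~\ref{l-s1}: pass to the complement and show that $\overline{G_2'}$ is $(C_4,2C_3,C_3+P_4,2P_4,T_{0,0,4})$-free. First I would record the relevant structure of $\overline{G_2'}$. The $x$-type vertices form an independent set~$X$, the $C$-type vertices form a clique~$\mathcal{C}$, the $k$-type vertices form an independent set~$K$, and no $x$-type vertex is adjacent to a $C$-type vertex; all of this is immediate from the construction. Since the lists $L(x_1),\dots,L(x_n)$ partition $\{1,\dots,2n\}$, every $k$-type vertex has exactly one neighbour in~$X$ and every $x$-type vertex~$x_i$ has exactly the two neighbours $k_{2i-1},k_{2i}$ in~$K$; hence the subgraph of $\overline{G_2'}$ induced by $X\cup K$ is a disjoint union of~$n$ paths~$P_3$ (with $x_i$ the middle vertex of its path), and so in particular it is $(P_4,C_3)$-free. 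Finally, each $C$-type vertex is adjacent in $\overline{G_2'}$ to exactly three $k$-type vertices, and the corresponding three colours are all odd (if the vertex is some~$C_j$) or all even (if it is some~$C_j'$). Getting this structural picture straight is really the crux; each of the five forbidden subgraphs then falls out quickly.

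For the three \emph{disconnected} graphs in the list the argument would be uniform. If $\overline{G_2'}$ contained an induced $2C_3$, $C_3+P_4$ or $2P_4$ with components $F_1$ and $F_2$, then, since $\mathcal{C}$ is a clique, two $C$-type vertices lying in different components would be adjacent, a contradiction; hence at least one of $F_1,F_2$ is disjoint from~$\mathcal{C}$ and is therefore an induced subgraph of $\overline{G_2'}[X\cup K]$. As neither $C_3$ nor $P_4$ is an induced subgraph of a disjoint union of copies of~$P_3$, this is impossible.

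For $C_4$-freeness I would proceed much as in Lemma~\ref{l-s1}. An induced $C_4$ meets the clique~$\mathcal{C}$ in at most two vertices, which (if two) are consecutive on the cycle. If it meets~$\mathcal{C}$ in two vertices $u_1,u_2$, then its other two vertices $u_3,u_4$ lie in $X\cup K$, are adjacent, and each is adjacent to exactly one of $u_1,u_2$; as $x$-type vertices have no $C$-type neighbour, both $u_3$ and $u_4$ would be of $k$-type, contradicting that~$K$ is independent. If it meets~$\mathcal{C}$ in at most one vertex, then at least three of its vertices lie in $X\cup K$ and induce a~$P_3$; since $\overline{G_2'}[X\cup K]$ is a disjoint union of copies of~$P_3$, this $P_3$ is a whole component $k_{2i-1}x_ik_{2i}$, and so the fourth vertex $u\in\mathcal{C}$ is adjacent to both $k_{2i-1}$ and $k_{2i}$, i.e.\ $L(u)$ contains both $2i-1$ and $2i$ --- impossible, as $L(u)$ consists of odd colours only or of even colours only.

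The case needing slightly more care, and the one I would flag as the main (if modest) obstacle, is $T_{0,0,4}$-freeness, because one first has to locate a $C$-type vertex inside the triangle. Write the vertices of $T_{0,0,4}$ as $a_0,b_0,c_0,c_1,c_2,c_3,c_4$, with triangle $a_0b_0c_0$ and path $c_0c_1c_2c_3c_4$. A triangle in $\overline{G_2'}$ cannot contain an $x$-type vertex (its two neighbours in the triangle would both be of $k$-type, but~$K$ is independent) and cannot contain two $k$-type vertices, so every triangle contains at least two $C$-type vertices; in particular at least one of $a_0,b_0$ is of $C$-type, say~$D$. None of $c_1,c_2,c_3,c_4$ is adjacent to $a_0$ or $b_0$ in $T_{0,0,4}$, so if some $c_p$ were of $C$-type it would be adjacent to~$D$ (as $\mathcal{C}$ is a clique), a contradiction; hence $c_1,c_2,c_3,c_4\in X\cup K$ and induce a~$P_4$ in $\overline{G_2'}[X\cup K]$, contradicting its $P_4$-freeness. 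This would complete the proof.
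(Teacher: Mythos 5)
Your proof is correct and takes essentially the same approach as the paper: pass to the complement, record that the $C$-type vertices form a clique while $X\cup K$ induces a disjoint union of $P_3$s, and check each forbidden graph case by case (your $C_4$ and $T_{0,0,4}$ analyses are organised slightly differently --- by intersection with the clique, and via the observation that every triangle contains two $C$-type vertices --- but the underlying contradictions are the same ones the paper uses). The only cosmetic slip is in the $C_4$ case ``at most one vertex in $\mathcal{C}$'', where you should note separately that a $C_4$ entirely inside $X\cup K$ is impossible before asserting the fourth vertex lies in $\mathcal{C}$.
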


\begin{proof}
We will prove that~$\overline{G_2'}$ is $(C_4,2C_3,C_3+\nobreak P_4,2P_4,T_{0,0,4})$-free.
Observe that in~$\overline{G_2'}$, the set of $C$-type vertices is a clique, the set of $x$-type vertices is an independent set and the set of $k$-type vertices is an independent set.
Furthermore, in~$\overline{G_2'}$, no $x$-type vertex is adjacent to a $C$-type vertex and every $x$-type vertex has degree~$2$.
In fact, the union of the set of $x$-type vertices and the $k$-type vertices induces a disjoint union of $P_3$s in~$\overline{G_2'}$.

\medskip
\noindent
{\bf $\mathbf{C_4}$-freeness.}
For contradiction, suppose that~$\overline{G_2'}$ contains an induced subgraph~$H$ isomorphic to~$C_4$, say the vertices of~$H$ are $u_1,u_2,u_3,u_4$ in that order.
As the union of the set of $x$-type and $C$-type vertices induces a $P_3$-free graph in~$\overline{G_2'}$, there must be at least two vertices of the~$C_4$ that are neither $x$-type nor~$C$-type.
Since the $k$-type vertices form an independent set, we may assume without loss of generality that~$u_1$ and~$u_3$ are of $k$-type.
It follows that~$u_2$ and~$u_4$ cannot be of $k$-type.
As the set of vertices of~$C$-type form a clique in~$\overline{G_2'}$, at least one of $u_2,u_4$, say~$u_2$, is of $x$-type.
If~$u_4$ is also of $x$-type, then~$u_2$ and~$u_4$ are $x$-type vertices with the same two colours in their list, namely those corresponding to~$u_1$ and~$u_3$.
This is not possible.
Thus~$u_4$ must be of $C$-type.
Then~$u_4$ is adjacent to the two $k$-type neighbours of an $x$-type vertex, which correspond to an even and odd colour.
This is not possible as~$u_4$, being a $C$-type vertex, is adjacent in~$\overline{G_2'}$ to (exactly three) $k$-type vertices, which correspond either to even colours only or to odd colours only.
We conclude that~$\overline{G_2'}$ is $C_4$-free.

\medskip
\noindent
{\bf $\mathbf{(2C_3,C_3+\nobreak P_4,2P_4)}$-freeness.}
For contradiction, suppose that~$\overline{G_2'}$ contains an induced subgraph~$H$ isomorphic to $2C_3$, $C_3+\nobreak P_4$ or~$2P_4$.
As the $k$-type and $x$-type vertices induce a disjoint union of~$P_3$s in~$\overline{G_2'}$, both components of~$H$ must contain a $C$-type vertex.
This is not possible, as $C$-type vertices form a clique in~$\overline{G_2'}$.
We conclude that~$\overline{G_2'}$ is $(2C_3,C_3+\nobreak P_4,2P_4)$-free.

\medskip
\noindent
{\bf $\mathbf{T_{0,0,4}}$-freeness.}
For contradiction, suppose that~$\overline{G_2'}$ contains an induced subgraph~$H$ isomorphic to~$T_{0,0,4}$ with vertices $a_0,a_1,a_2,a_3,a_4,b_0,c_0$ and edges $a_0b_0$, $b_0c_0$, $c_0a_0$, $a_0a_1$, $a_1a_2$, $a_2a_3$, $a_3a_4$.

First suppose, that neither~$b_0$ nor~$c_0$ is of~$C$-type.
Since the union of the set of $x$-type vertices and the $k$-type vertices induces a disjoint union of $P_3$s in~$\overline{G_2'}$ it follows that~$a_0$ is of $C$-type.
Since~$b_0$ and~$c_0$ are not of $C$-type and no vertex of $C$-type has a neighbour of $x$-type in~$\overline{G_2'}$, it follows that~$b_0$ and~$c_0$ must be of $k$-type.
This is not possible, because the $k$-type vertices form an independent set in~$\overline{G_2'}$.

Now suppose that at least one of $b_0,c_0$ is of $C$-type.
Since the vertices of $C$-type induce a clique in~$\overline{G_2'}$, it follows that no vertex in $A:=\{a_1,a_2,a_3,a_4\}$ is of $C$-type.
Since~$A$ induces a~$P_4$ in~$\overline{G_2'}$, but the union of the set of $x$-type vertices and the $k$-type vertices induces a disjoint union of $P_3$s in~$\overline{G_2'}$, this is a contradiction.
We conclude that~$\overline{G_2'}$ is $T_{0,0,4}$-free.\qed
\end{proof}

We are now ready to state the two main results of this section.
It is readily seen that {\sc Colouring} belongs to \NP.
Then the first theorem follows from Lemma~\ref{l-truth} combined with Lemma~\ref{l-s1}, whereas the second one follows from Lemma~\ref{l-truth} combined with Lemma~\ref{l-s2}.
Note that~$\overline{2C_3}$ is isomorphic to~$K_{3,3}$.

\begin{theorem}\label{t-1}
{\sc Colouring} is \NP-complete for $(2P_2,\overline{3P_2},\overline{T_{0,2,2}})$-free graphs.
\end{theorem}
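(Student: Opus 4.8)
The plan is to prove \NP-hardness by a polynomial-time reduction from {\sc Not-All-Equal $3$-Satisfiability} with positive literals only, which is \NP-complete by Schaefer~\cite{Sc78}, using exactly the construction developed above; membership in \NP\ is immediate, since a proposed $k$-colouring of a graph can be verified in polynomial time.

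First I would take an arbitrary instance $(X,{\cal C})$ of {\sc Not-All-Equal $3$-Satisfiability} with positive literals only, say with $X=\{x_1,\ldots,x_n\}$ and ${\cal C}=\{C_1,\ldots,C_m\}$, and build the graph~$G_1'$ precisely as described: form the complete bipartite graph~$G_1$ together with the lists~$L$, and then adjoin the clique $K=\{k_1,\ldots,k_{2n}\}$, making~$k_\ell$ adjacent to a vertex~$u$ of~$G_1$ exactly when $\ell\notin L(u)$. This graph has $O(n+m)$ vertices and $O((n+m)^2)$ edges and is clearly constructible in polynomial time, so the map $(X,{\cal C})\mapsto (G_1',2n)$ is computable in polynomial time.

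Next I would invoke Lemma~\ref{l-truth}, which states that $({\cal C},X)$ has a satisfying truth assignment if and only if~$G_1'$ has a $2n$-colouring; this is exactly the correctness of the reduction. Finally I would apply Lemma~\ref{l-s1}, which guarantees that every graph~$G_1'$ arising from this construction is $(2P_2,\overline{3P_2},\overline{T_{0,2,2}})$-free. Hence the reduction actually maps into the class of $(2P_2,\overline{3P_2},\overline{T_{0,2,2}})$-free graphs, and {\sc Colouring} restricted to this class is \NP-complete.

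The substantive work has already been done in the two supporting lemmas, so what remains here is essentially bookkeeping — confirming that the construction runs in polynomial time and that it lands in the prescribed graph class. The only place where care is genuinely needed is the case analysis underpinning Lemma~\ref{l-s1}, in particular the $C_4$-freeness step, where one exploits the fact that a $C$-type vertex of~$\overline{G_1'}$ is adjacent only to $k$-type vertices whose associated colours all have the same parity, whereas the two $k$-type neighbours of an $x$-type vertex have colours of opposite parity. Since that lemma is already available, I expect Theorem~\ref{t-1} to follow as an immediate corollary with no further obstacle.
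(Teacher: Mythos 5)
Your proposal is correct and follows exactly the paper's argument: membership in \NP\ is immediate, Lemma~\ref{l-truth} gives the correctness of the polynomial-time reduction from {\sc Not-All-Equal $3$-Satisfiability} with positive literals only, and Lemma~\ref{l-s1} places~$G_1'$ in the class of $(2P_2,\overline{3P_2},\overline{T_{0,2,2}})$-free graphs. No further comment is needed.
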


\begin{theorem}\label{t-2}
{\sc Colouring} is \NP-complete for $(2P_2,\overline{2C_3},\overline{C_3+P_4},\overline{2P_4},\overline{T_{0,0,4}})$-free graphs.
\end{theorem}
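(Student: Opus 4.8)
The plan is to combine the two preceding lemmas with the (easy) remaining ingredients. First I would check that {\sc Colouring} is in~\NP: a proposed assignment $c\colon V(G)\to\{1,\dots,k\}$ can be tested against the edge set of~$G$ in polynomial time, so the problem admits a polynomial-size, polynomial-time-verifiable certificate.

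For \NP-hardness I would reduce from {\sc Not-All-Equal $3$-Satisfiability} with positive literals only, which is \NP-complete by Schaefer~\cite{Sc78}. Given an instance $(X,{\cal C})$ with $X=\{x_1,\dots,x_n\}$ and ${\cal C}=\{C_1,\dots,C_m\}$, I would build the graph~$G_2'$ exactly as described before Lemma~\ref{l-truth}: start from the complete split graph~$G_2$ on the $x$-type vertices $x_1,\dots,x_n$ and the $C$-type vertices $C_1,C_1',\dots,C_m,C_m'$ equipped with the lists~$L$ (so $L(x_i)=\{2i-1,2i\}$, and $L(C_j),L(C_j')$ are the odd/even colour triples determined by $C_j$), and then add the clique $K=\{k_1,\dots,k_{2n}\}$ of $k$-type vertices with $k_\ell u\in E(G_2')$ if and only if $\ell\notin L(u)$. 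This graph has $3n+2m$ vertices and is constructed in time polynomial in $n+m$; the output instance of {\sc Colouring} is the pair $(G_2',2n)$.

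Correctness is immediate from Lemma~\ref{l-truth}: $(X,{\cal C})$ has a satisfying truth assignment if and only if~$G_2'$ has a $2n$-colouring. Moreover, Lemma~\ref{l-s2} shows that $G_2'$ is $(2P_2,\overline{2C_3},\overline{C_3+P_4},\overline{2P_4},\overline{T_{0,0,4}})$-free, so the instance we output genuinely belongs to the class under consideration. Hence {\sc Colouring} restricted to this class is \NP-hard, and together with membership in \NP it is \NP-complete. One may also record, as in the statement, that $\overline{2C_3}$ is isomorphic to~$K_{3,3}$.

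As for the main obstacle: at this point there is essentially nothing difficult left, since the substantive content sits in the preceding lemmas. The forbidden-induced-subgraph analysis of Lemma~\ref{l-s2} is where the real work happens (and the ``not-all-equal'' bookkeeping — that each clause receives at least one odd and at least one even colour through the $x$-type/$k$-type gadget, forcing a true and a false literal — is exactly what Lemma~\ref{l-true} packages). So in the write-up I would keep this theorem short, merely invoking Lemmas~\ref{l-truth} and~\ref{l-s2} together with \NP-membership.
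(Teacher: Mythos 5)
Your proposal is correct and matches the paper's own (very short) argument exactly: \NP-membership plus the reduction via~$G_2'$, with Lemma~\ref{l-truth} giving correctness and Lemma~\ref{l-s2} placing the instance in the required class. No issues.
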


\section{Conclusions}\label{s-con}

We showed that every connected graph is almost classified except for the claw and the~$P_5$.
Our notion of almost classified graphs originated from recent work~\cite{HL,KMP,Ma,ML17} on {\sc Colouring} for $(H_1,H_2)$-free graphs for connected graphs~$H_1$ and~$H_2$, in particular when $H_1=P_5$.
We decreased the number of open cases for the latter graph by showing new \NP-hardness results for $(2P_2,H)$-free graphs.
In the following theorem we summarize all known results for {\sc Colouring} restricted to $(2P_2,H)$-free graphs and $(P_5,H)$-free graphs.

\begin{theorem}\label{t-p5}
Let~$H$ be a graph on~$n$ vertices.
Then the following two statements hold:
\begin{enumerate}[(i)]
\item If~$\overline{H}$ contains a graph in $\{C_3+\nobreak P_4,3P_2,2P_4\}$ as an induced subgraph, or~$\overline{H}$ is not an induced subgraph of $T_{1,1,3}+\nobreak P_{2n-1}$, then {\sc Colouring} is \NP-complete for $(2P_2,H)$-free graphs.\\[-8pt]
\item If~$\overline{H}$ is an induced subgraph of a graph in $\{2P_1+\nobreak P_3,P_1+\nobreak P_4,P_2+\nobreak P_3,P_5,T_{0,0,1}+\nobreak P_1,\allowbreak T_{0,1,1},T_{0,0,2}\}$ or of $sP_1+\nobreak P_2$ for some integer $s\geq 0$, then {\sc Colouring} is polynomial-time solvable for $(P_5,H)$-free graphs.
\end{enumerate}
\end{theorem}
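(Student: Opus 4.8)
For part~(i), suppose first that $\overline H$ contains one of $3P_2$, $2P_4$, $C_3+\nobreak P_4$ as an induced subgraph; then $H$ contains $\overline{3P_2}$, $\overline{2P_4}$ or $\overline{C_3+\nobreak P_4}$ respectively, and so every $\overline{3P_2}$-free (resp.\ $\overline{2P_4}$-free, $\overline{C_3+\nobreak P_4}$-free) graph is automatically $H$-free. Since by Lemma~\ref{l-s1} the graphs $G_1'$ are $(2P_2,\overline{3P_2})$-free and by Lemma~\ref{l-s2} the graphs $G_2'$ are $(2P_2,\overline{2P_4},\overline{C_3+\nobreak P_4})$-free, in each of the three cases the \NP-hard instances of Lemma~\ref{l-truth} are $(2P_2,H)$-free, which yields \NP-completeness. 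Now suppose instead that $\overline H$ is not an induced subgraph of $T_{1,1,3}+\nobreak P_{2n-1}$. If $\overline H\notin{\cal T}$, then, since $\overline{2P_2}=C_4\notin{\cal T}$, Theorem~\ref{t-schindl} applied to the pair $\{2P_2,H\}$ gives \NP-completeness. The only remaining case, $\overline H\in{\cal T}$ but $\overline H\not\ssi T_{1,1,3}+\nobreak P_{2n-1}$, is absorbed into the first case by the following claim, so that no further reduction is needed: \emph{if $F\in{\cal T}$ and $F\not\ssi T_{1,1,3}+\nobreak P_{2m-1}$, where $m=|V(F)|$, then $F$ contains an induced $3P_2$, $2P_4$ or $C_3+\nobreak P_4$.} Throughout I use two elementary facts: $S\ssi G$ if and only if $\overline S\ssi\overline G$, and $(P_5,H)$-free graphs form a subclass of $(P_5,H')$-free graphs whenever $H\ssi H'$.

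I would prove this claim by first checking directly that the only graphs $T_{h,i,j}$ with $1\le h\le i\le j$ containing neither $3P_2$ nor $2P_4$ are $T_{1,1,1}$, $T_{1,1,2}$ and $T_{1,1,3}$: if $j\ge 4$, deleting $c_0$ leaves $P_{h+i+2}+\nobreak P_j\si 2P_4$; if $h\ge 2$, the tail-end edges $a_{h-1}a_h$, $b_{i-1}b_i$, $c_{j-1}c_j$ induce $3P_2$; and if $h=1<i$, the edges $a_0a_1$, $b_{i-1}b_i$, $c_{j-1}c_j$ induce $3P_2$. Each of $T_{1,1,1},T_{1,1,2},T_{1,1,3}$ is an induced subgraph of $T_{1,1,3}$, whose longest induced path is $P_6$; moreover every $T_{h,i,j}$ with $h,i,j\ge 1$ contains the induced $P_4$ $b_1b_0c_0c_1$. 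Now suppose $F\in{\cal T}$ contains no induced $3P_2$, $2P_4$ or $C_3+\nobreak P_4$, and look at the components of $F$ with at least two vertices. There are at most two of them (three independent edges, one from each, would induce $3P_2$); at most one of them contains a triangle (two triangle-containing components would give $2P_4$ through the induced $P_4$s above, and a triangle-containing component together with a path on at least four vertices would give $C_3+\nobreak P_4$); every triangle-containing component is one of $T_{1,1,1},T_{1,1,2},T_{1,1,3}$ by the check above; and if there are two path-components, the shorter has at most three vertices. In every one of these configurations $F\ssi T_{1,1,3}+\nobreak P_{2m-1}$: if $F$ has a triangle-containing component, place it --- being an induced subgraph of $T_{1,1,3}$ --- into the $T_{1,1,3}$-summand, leaving a linear forest on at most $m$ vertices to be embedded into $P_{2m-1}$; and if $F$ has no triangle-containing component it is itself a linear forest on $m$ vertices; in both cases one uses that a linear forest on at most $m$ vertices is an induced subgraph of $P_{2m-1}$. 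This contradicts $F\not\ssi T_{1,1,3}+\nobreak P_{2m-1}$, and the contrapositive is the claim. This case analysis, together with the auxiliary facts about induced subgraphs of $T_{1,1,3}$, is the only genuinely new content of the proof, and is where I expect the work to lie.

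Part~(ii) needs no new argument. By the monotonicity observation it suffices to prove polynomial-time solvability of {\sc Colouring} for $(P_5,\overline G)$-free graphs for each $G$ in the list. Every graph among $2P_1+\nobreak P_3$, $P_1+\nobreak P_4$, $P_2+\nobreak P_3$, $P_5$, $T_{0,0,1}+\nobreak P_1$, $T_{0,1,1}$, $T_{0,0,2}$, as well as $sP_1+\nobreak P_2$ for $s\le 3$, has at most five vertices, so these are covered by the known classification of {\sc Colouring} on $(P_5,H')$-free graphs with $|V(H')|\le 5$ (together with the trivial smaller cases). Finally, for $\overline H\ssi sP_1+\nobreak P_2$ with $s$ arbitrary --- equivalently $H$ an induced subgraph of $K_{s+2}-\nobreak e$, in particular of a complete graph --- polynomial-time solvability follows from the known results on {\sc Colouring} for $(P_5,K_{s+2}-\nobreak e)$-free graphs, and, for the complete-graph case, from the polynomial $\chi$-boundedness of $P_5$-free graphs combined with the polynomial-time algorithms for $k$-{\sc Colouring} on $P_5$-free graphs. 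The main obstacle throughout is thus not part~(ii), which is essentially a compilation, but the combinatorial claim in part~(i): pinning down exactly which $T_{h,i,j}$ embed in $T_{1,1,3}$ and verifying the placement count for $P_{2m-1}$.
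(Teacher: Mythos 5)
Your proof is correct and follows the same overall strategy as the paper: Theorems~\ref{t-1} and~\ref{t-2} (via Lemmas~\ref{l-truth}, \ref{l-s1} and~\ref{l-s2}) supply the hardness when $\overline{H}$ contains one of the listed graphs, Theorem~\ref{t-schindl} handles $\overline{H}\notin{\cal T}$, a structural argument shows the remaining case collapses into the first, and part~(ii) is a compilation of known polynomial-time results. The one genuine difference is in the structural step. The paper derives $\overline{H}\ssi T_{1,1,3}+P_{2n-1}$ from freeness of \emph{all six} graphs excluded by Theorems~\ref{t-1} and~\ref{t-2} (namely $2C_3$, $C_3+P_4$, $3P_2$, $2P_4$, $T_{0,2,2}$, $T_{0,0,4}$), which makes each step one line: $2C_3$-freeness gives at most one non-path component, and $(T_{0,2,2},T_{0,0,4})$-freeness forces that component into $T_{1,1,3}$. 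You instead prove the sharper statement that, within ${\cal T}$, freeness of the three graphs $C_3+P_4$, $3P_2$, $2P_4$ alone already yields the same conclusion, at the cost of a direct case analysis of the $T_{h,i,j}$; your three cases ($j\geq 4$ gives $2P_4$, $h\geq 2$ and $h=1<i$ each give $3P_2$) and the $P_{2m-1}$ embedding count for linear forests are all verified correctly, and this version matches the literal statement of~(i) slightly more cleanly, whereas the paper reconciles the two formulations via its parenthetical remark that $T_{1,1,3}+P_{2n-1}$ is $(2C_3,T_{0,2,2},T_{0,0,4})$-free. One caution on part~(ii): appealing to ``the known classification for connected five-vertex graphs'' only establishes that the complexity is \emph{determined}, not that it is polynomial, so strictly one must invoke the individual polynomial-time results (as the paper does for $2P_1+P_3$, $P_1+P_4$, $P_2+P_3$, $P_5$, $T_{0,0,1}+P_1$, $T_{0,1,1}$, $T_{0,0,2}$ and $sP_1+P_2$); since for every graph on your list the determined answer is indeed polynomial, nothing is actually broken, and your alternative $\chi$-boundedness argument for the complete-graph subcase, while redundant, is also sound.
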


\begin{proof}
If~$\overline{H}$ contains a graph in $\{2C_3,C_3+\nobreak P_4,3P_2,2P_4,T_{0,2,2},T_{0,0,4}\}$ as an induced subgraph, then {\sc Colouring} is \NP-complete for $(2P_2,H)$-free graphs due to Theorems~\ref{t-1} and~\ref{t-2}.
We may therefore assume that~$\overline{H}$ is $(2C_3,C_3+\nobreak P_4,3P_2,2P_4,T_{0,2,2},T_{0,0,4})$-free.
(Note that $T_{1,1,3}+\nobreak P_{2n-1}$ is $(2C_3,T_{0,2,2},T_{0,0,4})$-free.)

Recall that~${\cal T}$ is the class of graphs every component of which is isomorphic to a graph~$T_{h,i,j}$ for some $1\leq h\leq i\leq j$ or a path~$P_r$ for some $r\geq 1$.
Note that $\overline{2P_2}=C_4 \notin {\cal T}$.
Therefore, if $\overline{H} \notin {\cal T}$, then {\sc Colouring} is \NP-complete for $(2P_2,H)$-free graphs by Theorem~\ref{t-schindl}.
We may therefore assume that $\overline{H} \in {\cal T}$. 
Since~$\overline{H}$ is $2C_3$-free, $\overline{H}$ can contain at most one component that is not a path.
Since~$\overline{H}$ is $(T_{0,2,2},T_{0,0,4})$-free, if~$\overline{H}$ does have a component that is not a path, then this component must be an induced subgraph of~$T_{1,1,3}$.
The union of components of~$\overline{H}$ that are isomorphic to paths form an induced subgraph of~$P_{2n-1}$.
Therefore~$\overline{H}$ is an induced subgraph of $T_{1,1,3}+\nobreak P_{2n-1}$.

It is known that {\sc Colouring} is polynomial-time solvable for $(P_5,H)$-free graphs if~$\overline{H}$ is an induced subgraph of
$2P_1+\nobreak P_3$~\cite{Ma},
$P_1+\nobreak P_4$ (this follows from the fact that $(P_5,\overline{P_1+P_4})$-free graphs have clique-width at most~$5$~\cite{BLM04}; see also~\cite{BBKRS05} for a linear-time algorithm),
$P_2+\nobreak P_3$~\cite{ML17},
$P_5$~\cite{HL},
$T_{0,0,1}+\nobreak P_1$~\cite{KMP},
$T_{0,1,1}$~\cite{KMP},
$T_{0,0,2}$~\cite{KMP}
or $sP_1+\nobreak P_2$ for some integer $s\geq 0$~\cite{ML17}.\qed
\end{proof}

Theorem~\ref{t-p5} leads to the following open problem, which shows how the~$P_5$ is not almost classified.
Recall that $T_{0,0,0}=C_3$.

\begin{oproblem}\label{o-p5}
Determine the complexity of {\sc Colouring} for $(2P_2,H)$-free graphs and for $(P_5,H)$-free graphs if
\begin{itemize}
\item $\overline{H}=sP_1+\nobreak P_t+\nobreak T_{h,i,j}$ for $0\leq h\leq i\leq j \leq 1$, $s\geq 0$ and $2\leq t\leq 3$
\item $\overline{H}=sP_1+\nobreak T_{h,i,j}$ for $0\leq h\leq i\leq1\leq j\leq 3$ and $s\geq 0$ such that $h+\nobreak i+\nobreak j+\nobreak s\geq 3$
\item $\overline{H}=sP_1+\nobreak T_{0,0,0}$ for $s\geq 2$
\item $\overline{H}=sP_1+\nobreak P_t$ for $s\geq 0$ and $3\leq t\leq 7$ such that $s+\nobreak t\geq 6$
\item $\overline{H}=sP_1+\nobreak P_t+\nobreak P_u$ for $s\geq 0$, $2\leq t\leq 3$ and $3 \leq u \leq 4$ such that $s+\nobreak t+\nobreak u \geq 6$
\item $\overline{H}=sP_1+\nobreak 2P_2$ for $s\geq 1$.
\end{itemize}
\end{oproblem}
Open Problem~\ref{o-p5} shows the following.
\begin{itemize}
\item The open cases for {\sc Colouring} restricted to $(2P_2,H)$-free graphs and $(P_5,H)$-free graphs coincide.\\[-10pt]
\item The graph $H$ in each of the open cases is connected.\\[-10pt]
\item The number of minimal open cases is~$10$, namely when $\overline{H}\in \{C_3+\nobreak 2P_1,C_3+\nobreak P_2,P_1+\nobreak 2P_2\}$ (see also Section~\ref{s-intro}) and when  $\overline{H}\in \{3P_1+\nobreak P_3, 2P_1+\nobreak P_4, 2P_3, P_6, T_{0,1,1}+\nobreak P_1, T_{0,1,2}, T_{1,1,1}\}$.
\end{itemize}
As every graph~$H$ listed in Open Problem~\ref{o-p5} appears as an induced subgraph in both the graph~$G_1'$ and the graph~$G_2'$ defined in Section~\ref{s-hard}, we need new arguments to solve the open cases in Problem~\ref{o-p5}.

The complexity of {\sc Colouring} for $(K_{1,3},H)$-free graphs is less clear.
As mentioned in Section~\ref{s-intro}, the cases where $H\in \{4P_1,2P_1+\nobreak P_2,\overline{C_4+P_1}\}$ are still open.
Moreover, $K_{1,3}$ is not almost classified, as the case $H=P_t$ is open for all $t\geq 6$ (polynomial-time solvability for $t=5$ was shown in~\cite{Ma13}).
Note that $|E(\overline{H})|$ may be arbitrarily large, while Open Problem~\ref{o-p5} shows that $|E(\overline{H})|\leq 8$ in all open cases for the~$P_5$.
Since we have no new results for the case $H_1=K_{1,3}$, we refer to~\cite{LP14} for further details or to the summary of {\sc Colouring} restricted to $(H_1,H_2)$-free graphs in~\cite{GJPS}.


\begin{thebibliography}{10}

\bibitem{BDJP16}
A.~Blanch\'e, K.~K. Dabrowski, M.~Johnson, and D.~Paulusma.
\newblock Hereditary graph classes: When the complexities of {Colouring} and
  {Clique Cover} coincide.
\newblock {\em CoRR}, abs/1607.06757, 2016.

\bibitem{BBKRS05}
H.~L. Bodlaender, A.~Brandst{\"a}dt, D.~Kratsch, M.~Rao, and J.~Spinrad.
\newblock On algorithms for ({$P_5$},gem)-free graphs.
\newblock {\em Theoretical Computer Science}, 349(1):2--21, 2005.

\bibitem{BLM04}
A.~Brandst{\"a}dt, H.-O. Le, and R.~Mosca.
\newblock Chordal co-gem-free and ({$P_5$},gem)-free graphs have bounded
  clique-width.
\newblock {\em Discrete Applied Mathematics}, 145(2):232--241, 2005.

\bibitem{BGPS12a}
H.~Broersma, P.~A. Golovach, D.~Paulusma, and J.~Song.
\newblock Determining the chromatic number of triangle-free {$2P_3$}-free
  graphs in polynomial time.
\newblock {\em Theoretical Computer Science}, 423:1--10, 2012.

\bibitem{CH}
K.~Cameron and C.~T. Ho\`ang.
\newblock Solving the clique cover problem on (bull,{$C_4$})-free graphs.
\newblock {\em CoRR}, abs/1704.00316, 2017.

\bibitem{CRST06}
M.~Chudnovsky, N.~Robertson, P.~Seymour, and R.~Thomas.
\newblock The strong perfect graph theorem.
\newblock {\em Annals of Mathematics}, 164(1):51--229, 2006.

\bibitem{DDP15}
K.~K. Dabrowski, F.~Dross, and D.~Paulusma.
\newblock Colouring diamond-free graphs.
\newblock {\em Journal of Computer and System Sciences}, 89:410--431, 2017.

\bibitem{DGP14}
K.~K. Dabrowski, P.~A. Golovach, and D.~Paulusma.
\newblock Colouring of graphs with {Ramsey-type} forbidden subgraphs.
\newblock {\em Theoretical Computer Science}, 522:34--43, 2014.

\bibitem{EHK98}
T.~Emden-Weinert, S.~Hougardy, and B.~Kreuter.
\newblock Uniquely colourable graphs and the hardness of colouring graphs of
  large girth.
\newblock {\em Combinatorics, Probability and Computing}, 7(04):375--386, 1998.

\bibitem{GHP}
S.~Gaspers, S.~Huang, and D.~Paulusma.
\newblock Colouring square-free graphs without long induced paths.
\newblock {\em Manuscript}, 2017.

\bibitem{GH09}
P.~A. Golovach and P.~Heggernes.
\newblock Choosability of {$P_5$}-free graphs.
\newblock {\em Proc. MFCS 2009, LNCS}, 5734:382--391, 2009.

\bibitem{GJPS}
P.~A. Golovach, M.~Johnson, D.~Paulusma, and J.~Song.
\newblock A survey on the computational complexity of colouring graphs with
  forbidden subgraphs.
\newblock {\em Journal of Graph Theory}, 84(4):331--363, 2017.

\bibitem{GP14}
P.~A. Golovach and D.~Paulusma.
\newblock List coloring in the absence of two subgraphs.
\newblock {\em Discrete Applied Mathematics}, 166:123--130, 2014.

\bibitem{GLS84}
M.~Gr\"otschel, L.~Lov\'asz, and A.~Schrijver.
\newblock Polynomial algorithms for perfect graphs.
\newblock {\em Annals of Discrete Mathematics}, 21:325--356, 1984.

\bibitem{HH}
P.~Hell and S.~Huang.
\newblock Complexity of coloring graphs without paths and cycles.
\newblock {\em Discrete Applied Mathematics}, 216, Part 1:211--232, 2017.

\bibitem{HL}
C.~T. Ho\`ang and D.~A. Lazzarato.
\newblock Polynomial-time algorithms for minimum weighted colorings of
  {$(P_5,\overline{P_5})$}-free graphs and similar graph classes.
\newblock {\em Discrete Applied Mathematics}, 186:106--111, 2015.

\bibitem{Ho81}
I.~Holyer.
\newblock The {NP-Completeness} of edge-coloring.
\newblock {\em {SIAM} Journal on Computing}, 10(4):718--720, 1981.

\bibitem{Hu16}
S.~Huang.
\newblock Improved complexity results on $k$-coloring {$P_t$-free} graphs.
\newblock {\em European Journal of Combinatorics}, 51:336--346, 2016.

\bibitem{HJP14}
S.~Huang, M.~Johnson, and D.~Paulusma.
\newblock Narrowing the complexity gap for colouring {$(C_s,P_t)$-free} graphs.
\newblock {\em The Computer Journal}, 58(11):3074--3088, 2015.

\bibitem{KMP}
T.~Karthick, F.~Maffray, and L.~Pastor.
\newblock Polynomial cases for the vertex coloring problem.
\newblock {\em CoRR}, abs/1709.07712, 2017.

\bibitem{KKTW01}
D.~Kr{\'a}l', J.~Kratochv\'{\i}l, {\relax Zs}.~Tuza, and G.~J. Woeginger.
\newblock Complexity of coloring graphs without forbidden induced subgraphs.
\newblock {\em Proc. WG 2001, LNCS}, 2204:254--262, 2001.

\bibitem{Lo73}
L.~Lov\'asz.
\newblock Coverings and coloring of hypergraphs.
\newblock {\em Congressus Numerantium}, VIII:3--12, 1973.

\bibitem{LM15}
V.~V. Lozin and D.~S. Malyshev.
\newblock Vertex coloring of graphs with few obstructions.
\newblock {\em Discrete Applied Mathematics}, 216, Part 1:273--280, 2017.

\bibitem{LP14}
V.~V. Lozin and C.~Purcell.
\newblock Coloring vertices of claw-free graphs in three colors.
\newblock {\em Journal of Combinatorial Optimization}, 28(2):462--479, 2014.

\bibitem{MF96}
F.~Maffray and M.~Preissmann.
\newblock On the {NP-completeness} of the $k$-colorability problem for
  triangle-free graphs.
\newblock {\em Discrete Mathematics}, 162(1--3):313--317, 1996.

\bibitem{Ma13}
D.~S. Malyshev.
\newblock The coloring problem for classes with two small obstructions.
\newblock {\em Optimization Letters}, 8(8):2261--2270, 2014.

\bibitem{Ma}
D.~S. Malyshev.
\newblock Two cases of polynomial-time solvability for the coloring problem.
\newblock {\em Journal of Combinatorial Optimization}, 31(2):833--845, 2016.

\bibitem{ML17}
D.~S. Malyshev and O.~O. Lobanova.
\newblock Two complexity results for the vertex coloring problem.
\newblock {\em Discrete Applied Mathematics}, 219:158--166, 2017.

\bibitem{Ra30}
F.~P. Ramsey.
\newblock On a problem of formal logic.
\newblock {\em Proceedings of the London Mathematical Society},
  s2-30(1):264--286, 1930.

\bibitem{Sc78}
T.~J. Schaefer.
\newblock The complexity of satisfiability problems.
\newblock {\em Proc. STOC 1978}, pages 216--226, 1978.

\bibitem{Sc05}
D.~Schindl.
\newblock Some new hereditary classes where graph coloring remains {NP-hard}.
\newblock {\em Discrete Mathematics}, 295(1--3):197--202, 2005.

\end{thebibliography}
\end{document}